\numberwithin{equation}{section}
\newtheorem{defi}{Definition}[section]
\newtheorem{thm}[defi]{Theorem}
\newtheorem{rem}[defi]{Remark}
\newtheorem{cor}[defi]{Corollary}
\newcommand{\PP}{\mathbb{P}}
\newcommand{\QQ}{\mathbb{Q}}
\newcommand{\E}{\mathbb{E}}
\newcommand{\R}{\mathbb{R}}
\begin{document}

\title[Insurance policies with cash flows subject to interest rate changes]
{Life insurance policies with cash flows subject to random interest rate changes}
\author[D. Ba\~{n}os]{David Ba\~{n}os}
\address{D. Ba\~{n}os: Department of Mathematics, University of Oslo, Moltke Moes vei 35, P.O. Box 1053 Blindern, 0316 Oslo, Norway.}
\email{davidru@math.uio.no}
 
\maketitle


\begin{abstract}
The main purpose of this work is to derive a partial differential equation for the reserves of life insurance liabilities subject to stochastic interest rates where the benefits and premiums depend directly on changes in the interest rate curve. In particular, we allow the payment streams to depend on the performance of an overnight technical interest rate, making them stochastic as well. This opens up for considering new types of contracts based on the performance of the insurer's returns on their own investments. We provide explicit solutions for the reserves when the premiums and benefits vary according to interest rate levels or averages under the Vasicek model and conduct some simulations computing reserve surfaces numerically. We also give an example of a reinsurance treaty taking over pension payments when the insurer's average returns fall under some specified threshold.
\end{abstract}
 
\vskip 0.1in
\textbf{Key words and phrases}: Reserve, stochastic reserve, interest rate, stochastic interest rate, Thiele's equation, Thiele's PDE, regime switching interest rate.

\textbf{MSC2010:}  60H30, 91G20, 91G30, 91G60, 35Q91.

\section{Introduction}

In actuarial science, a reserve is a liability equal to the present value of future cash flows that the insurance company promises to pay out to the insured under certain conditions. In easier terms, a reserve is an estimate of \emph{how much} the insurance company should charge today in order to meet future payments owed to the insured. This quantity is the basis to ensure solvency of the company and the standard way of computing premiums. The literature sometimes distinguishes between the terms \emph{present value} of future obligations and \emph{actuarial reserve}. The former is the cost of the insurance itself, while the latter is obtained by subtracting the premiums provided by the insured, which are used to pay back to the insured. In this sense, the paid-in premiums should match, in average, the future obligations in such a way that the reserve at the entry of the contract is null.

In life insurance, the main sources of risk of a policy are the state of the insured which triggers the payments, and the future development of the return on the financial investments or technical interest rate. In this note, we focus on the latter and model its risk via a continuous time stochastic process of Itô-diffusion type and Markovian. We suggest a way to make cash flows interest rate dependent and provide some new examples to show how this may relax the burden of low interest rate regimes.

Life insurance claims in the context of stochastic interest rates has been studied before. We mention \cite{Bacinello93, Bacinello94} for some specific interest rate models and \cite{Bacinello, Kurtz96, Nielsen95} for more general models in the framework of Heath-Jarrow-Morton. A model for the interest rate of diffusion type was considered by Norberg and Møller in \cite{Norberg96} and by the authors in \cite{Banos20} where they look at unit-linked insurances with variance risk, as well.

A classical way of computing reserves in the continuous time setting is by solving the so-called Thiele's differential equation, which, in the case of deterministic interest rate is an ordinary differential equation and, in the case of stochastic interest rate, a partial differential equation (PDE). The corresponding Thiele's equation for the case of stochastic rates was derived by Norberg and Møller in \cite{Norberg96} and later risk adjusted by Persson in \cite{Persson98}. In the present manuscript we use the \emph{no arbitrage} approach as in \cite{Persson98} to price insurance claims. A justification of why this is the right pricing approach can be found there. In addition, a complete and thourough discussion of the no arbitrage approach to Thiele's equation is creditted to Steffensen in \cite{Steffensen00}. While interest rate there was taken to be deterministic, the reader may benefit from a precise and excellent discussion on the topic.

Although the above-mentioned works do consider stochastic rates, their models assume that payments are deterministic given the state of the policyholders stipulated by contract. In this note we want to generalize the type of contracts to those looking at interest rate behaviour both at punctual times and path-dependent past averages. In the spirit of \cite{Persson98} we generalize Thiele's partial differential equation (Thiele's PDE) to incorporate interest rate dependent cash flows. The motivation to do so comes from the fact that insurance policies have historically considered fixed returns for their policyholders. This has proven to carry the risk of low returns in the bond market and as a consequence, limitations in order to cover the agreed future liabilities. In this regard, we propose new policies that take into account the behaviour of the returns and how to price them. We also propose a reinsurance treaty which covers a percentage of the pension policies in case of low returns.

The paper is organized as follows: In Section \ref{framework} we introduce our modelling framekwork for the insurance and the interest rate models in two independent probability spaces coupled together. In Section \ref{section:reserve} we review the classical pricing of financial derivatives and adapt the Feynman-Kac formula to both European and Asian-type options, where the latter are options whose contingent claim is a function of the integral of the underlying. Then we derive a Thiele partial differential equation for insurance contracts whose cash flows depend on the performance of the interest rate. We look at specific regime switching contracts in Section \ref{section:stochres}. The examples are given in general terms, but the simulations and analysis of concrete results are provided under the Vasicek interest rate model using a rather simplified model for Norwegian mortality. More specifically we look at: a pure endowment with reduction on premiums under high interest rate levels, pension plans with pension rise under high interest rate regimes, interest rate caps and floors insurances, a binary endowment associated to two average interest rate levels and finally a reinsurance treaty taking over part of the pensions if the average returns during the premium phase are below some threshold.

\section{Framework}\label{framework}

Our modelling framework will consist of two independent probability spaces. On the one hand, the states of the insured at any given time, which will be modelled by a Markov chain and on the other hand, the value of an overnight technical interest rate.

\subsection{The insurance model}

Let $(\Omega_1, \mathcal{A}_1,\PP_1)$ be a complete probability space where $\Omega_1$ is the set of outcomes, $\mathcal{A}_1$ is a suitable $\sigma$-algebra and $\PP_1$ is a probability measure on $(\Omega_1,\mathcal{A}_1)$ and we denote by $\E_1$ the expectation under $\PP_1$. This space carries a stochastic process $X=\{X_t, t\in [0,T]\}$ where $T>0$ is a time horizon, possibly infinite. We will consider the ($\PP_1$-augmented) filtration generated by $X$ and denote it by $\mathcal{F}^X=\{\mathcal{F}_t^X\}_{t\in [0,T]}$. Since we consider no further information on this space, it is natural to take $\mathcal{A}_1=\mathcal{F}_T^X$. We assume that $X$ is a regular continuous time Markov chain taking values in a finite state space $\mathscr{S}$ with transition probabilities
$$p_{ij}(s,t) \triangleq \PP_1 \left[ X_t = j| X_s=i\right] , \quad i,j\in \mathscr{S}.$$

By regular here, we mean that the transition rates
$$\mu_{ij}(t) \triangleq \lim_{\substack{h\to 0 \\ h>0}} \frac{p_{ij}(t,t+h)}{h}, \quad i,j \in \mathscr{S},\quad i\neq j$$
and
$$\mu_{i}(t) \triangleq \lim_{\substack{h\to 0 \\ h>0}} \frac{1-p_{ii}(t,t+h)}{h}, \quad i \in \mathscr{S}$$
exist for every $t\in [0,T]$, are finite and the functions $\mu_{ij}$, $i,j\in \mathscr{S}$ are continuous. Hence, we can recover $\{p_{ij}\}_{i,j\in \mathscr{S}}$ from $\{\mu_{ij}\}_{i,j\in \mathscr{S}}$ via Kolmogorov's equations.

On this space we inherently have the following processes
$$I_i^X(t) = \begin{cases} 1, \mbox{ if } X_t=i,\\ 0, \mbox{ if } X_t\neq i \end{cases},$$
$$N_{ij}^X(t) = \# \{s\in (0,t): X_{s^-} = i, X_s=j \}.$$
Here, $\#$ denotes the counting measure and $X_{t^-}:= \lim_{\substack{u\to t\\ u<t}}X_u$ the left a.s.-limit of $X$ at the point $t$. The random variable $I_i^X(t)$ tells us whether the insured is in state $i$ at time $t$ and $N_{ij}^X(t)$ tells us the number of transitions from $i$ to $j$ in the whole period $(0,t)$.

\begin{defi}[Stochastic cash flow]
A stochastic cash flow is a stochastic process $A=\{A_t\}_{t\geq 0}$ with almost all sample paths with bounded variation.
\end{defi}

More concretely, we will consider cash flows described by an insurance policy entirely determined by its policy functions. We denote by $a_i(t)$, $i\in \mathscr{S}$, the sum of payments from the insurer to the insured up to time $t$, given that we know that the insured has always been in state $i$. Moreover, we will denote by $a_{ij}(t)$, $i,j \in \mathscr{S}$, $i\neq j$, denotes the payments which are due when the insured switches state from $i$ to $j$ at time $t$. We always assume that these functions are of bounded variation (almost surely in the case of random payments). The cash flows we will consider are entirely described by the policy functions defined by an insurance policy.

\begin{defi}[Policy cash flow]
We consider payout functions $a_i(t)$, $i\in S$ and $a_{ij}(t)$, $i,j\in S$, $i\neq j$ for $t\geq 0$ of bounded variation. The (stochastic) cash flow associated to this insurance is defined by
$$A(t) = \sum_{i\in S} A_i(t) + \sum_{\substack{i,j\in S\\ i\neq j}} A_{ij}(t),$$
where
$$A_i(t) = \int_0^t I_i^X(s) da_i(s), \quad A_{ij}(t) = \int_0^t a_{ij}(s) dN_{ij}^X(s).$$

The quantity $A_i$ corresponds to the accumulated liabilities while the insured is in state $i$ and $A_{ij}$ for the case when the insured switches from $i$ to $j$.
\end{defi}

The value of a stochastic cash flow $A$ at time $t$ will be denoted by $V(t,A)$ and is defined as
$$V(t,A)\triangleq \frac{1}{v(t)}\int_t^\infty v(s) dA(s),$$
where $v$ is a suitable discount factor (e.g. $v(t)$ can be today's value of one monetary unit at time $t$ with respect to a technical interest rate, see Definition \ref{discount}).

The stochastic integral is a well-defined pathwise Riemann-Stieltjes integral since $A$ is of bounded variation with probability one. Since we wish to have the best forecast for the reserve at a given time $t$, it is natural to project on to the available information of the insurance company which, at this point, is given by $\mathcal{F}_t^X$. Hence, the present value of the cash flow $A$ given information $\mathcal{F}_t^X$ is given by
\begin{align}\label{eq:Vi}
V_{\mathcal{F}^X}^+(t,A)\triangleq \E_1 [V(t,A)|\mathcal{F}_t^X].
\end{align}
In the above expression one can use any type of information in the conditional expectation to obtain proper reserves. In this case however, we may use the Markovianity of $X$ in order to compute $V_{\mathcal{F}^X}^+(t,A)$ as an actual value, that is by the Markov property of $X$,
$$V_{\mathcal{F}^X}^+(t,A) = \E_1 [V(t,A)|\mathcal{F}_t^X] = \E_1 [V(t,A)|\sigma(X_t)]=H(t,X_t)$$
for some Borel-measurable function $H$, and hence
$$V_i^+(t,A) \triangleq H(t,i)$$
which corresponds to \eqref{eq:Vi} and denotes the present value of $A$ given that we know that the insured is in state $i$ at time $t$. An often abuse of notation is to simply write
$$V_i^+(t,A)=\E_1 [V(t,A)|X_t=i].$$

Having established our insurance model, we will next introduce our market model with stochastic interest rate in an independent probability space.

\subsection{The technical interest rate model}

Let $(\Omega_2,\mathcal{A}_2, \PP_2)$ be a complete probability space where $\Omega_2$ is the set of outcomes, $\mathcal{A}_2$ is a suitable $\sigma$-algebra and $\PP_2$ is a probability measure on $(\Omega_2,\mathcal{A}_2)$ and we denote by $\E_2$ the expectation under $\PP_2$. This space carries a stochastic process: $r=\{r_t, t\in [0,T]\}$ modelling the value of an overnight technical interest rate. On this space we consider the ($\PP_2$-augmented) filtration generated by this process, denoted by $\mathcal{F}^{r}=\{\mathcal{F}_t^{r}\}$ and since no further information is considered we take $\mathcal{A}_2=\mathcal{F}_T^{r}$. A natural modeling assumption at this point is to assume that $\mathcal{F}^X$ and $\mathcal{F}^r$ are independent.

More concretely, $r$ will be governed by the following stochastic differential equation of Itô type.

\begin{align}\label{eq:r}
dr_t=\lambda(t,r_t)dt + \tau(t,r_t)dW_t,\quad r_0=x\in \mathbb{R}, \quad t\in [0,T].
\end{align}
Here, $W$ is a standard Wiener process. Moreover, $\lambda,\tau:[0,T]\times \mathbb{R} \rightarrow \mathbb{R}$ are measurable functions such that a pathwise unique global strong solution exists, and all moments are finite. For instance, if they satisfy the Lipschitz property uniformly in time, have at most linear growth and $\tau$ is away from $0$, we have such result.

Moreover, $r$ from \eqref{eq:r} is a strong Markov process admitting a stochastic flow of homeomorphisms satisfying the flow property, i.e $r_t^{s,r_s^{0,x}} = r_t^{0,x}$, where
\begin{align}\label{rflow}
r_t^{s,x} = x+\int_s^t \lambda(u,r_u^{s,x})du + \int_s^t \tau(u,r_u^{s,x})dW_u, \quad 0\leq s\leq t\leq T, \quad x\in \R.
\end{align}

\begin{rem}
Conditions for the existence of unique global strong solutions admitting a stochastic flow of homeomorphisms can be relaxed considerably. Low regularity can be considered if one wishes to, but since this is not the purpose of this work, we impose the classical Itô assumptions.
\end{rem}

\begin{defi}[Discount factor]\label{discount}
We define the (stochastic) discount factor as the process
$$v(t) = e^{-\int_0^t r_s ds}, \quad t\geq 0,$$
which models the current value of one monetary unit at time $t$.
\end{defi}
Hence, a liability $Z$ which is due at time $s>0$ has a present value today of $v(s)Z$, and value $\frac{v(s)}{v(t)}Z$ at time $t$, $0 \leq t \leq s$.

\subsection{The model}

From now on we work on the probability space $(\Omega,\mathcal{A},\PP)$ where $\Omega=\Omega_1\times\Omega_2$, $\mathcal{A}=\mathcal{A}_1\otimes \mathcal{A}_2$ and $\PP=\PP_1\times \PP_2$ and denote by $\E$ the expectation under $\PP$. The overall information considered by the insurer is $\mathcal{F}_t = \mathcal{F}_t^X \otimes \mathcal{F}_t^{r}$, $t\in [0,T]$ and we have for an $\mathcal{F}_T$-measurable random variable $Y$ on $(\Omega, \mathcal{A},\PP)$ that
$$\E [Y| \mathcal{F}_t] = \E_1 [ \E_2 [Y| \mathcal{F}_t^{r}]| \mathcal{F}_t^{X}] =\E_2 [ \E_1 [Y| \mathcal{F}_t^{X}]| \mathcal{F}_t^{r}]$$
for all $t\in [0,T]$.

We will simultaneously work on $\PP$ or restricted to $\PP_i$, $i=1,2$ when necessary, without really distinguishing notations whenever the context is clear. For example, if $Y\in L^1(\Omega,\mathcal{A},\PP)$ and $Y$ only depends on $\omega_1$ then $\E[Y]$ is implicitly both under $\PP$ and $\PP_2$ since $\E[Y]=\int_{\Omega}Y(\omega)\PP(d\omega) =\int_{\Omega_1}Y(\omega_1)\PP_1(d\omega_1) \int_{\Omega_2}\PP_2(d\omega_2)= \E_2[Y]$ since we are working with probability measures.

\subsection{Present value of a policy when $r$, $a_i$ and $a_{ij}$ are deterministic}
We assume for a moment that $v(t)$ is known and $a_i$, $a_{ij}$, $i,j\in \mathscr{S}$ are deterministic. In particular, the only information available for the insurer is the state of the insured. Then \cite[Theorem 4.6.10]{Koller12} states that the present value of a cash flow $A$ given by an insurance policy entirely described by the functions $a_i$ and $a_{ij}$ is given by
\begin{align}\label{reserve1}
V_i^+(t,A)=\sum_{j\in \mathscr{S}}\left[ \int_t^\infty\frac{v(s)}{v(t)}p_{ij}(t,s) da_j(s) + \sum_{\substack{k\in \mathscr{S}\\k\neq j}} \int_t^\infty \frac{v(s)}{v(t)}p_{ij}(t,s)\mu_{jk}(s) a_{jk}(s)ds\right].
\end{align}
The proof of the above formula is a consequence of \cite[Theorem 4.6.3]{Koller12} which corresponds to \eqref{eq:Vi} using the Markov property on the process $X$. The formula itself is quite intuitive. The (future) value of the policy is the sum over all states of discounted accumulated payment streams coming from the insured being or changing state.

If we assume that $a_i$ are a.e. differentiable with derivative $\dot{a}_i$ and with possibly countable discontinuities at say, points $t_1,\dots,t_n \in [0,\infty)$, $n\geq 1$ then we can recast \eqref{reserve1} as
\begin{align}\label{reserve2}
\begin{split}
V_i^+(t,A)=& \, \sum_{j\in \mathscr{S}}\bigg[\sum_{l=1}^n \frac{v(t_l)}{v(t)} p_{ij}(t,t_l)\Delta a_j(t_l) \mathbb{I}_{\{t<t_l\}}+ \int_t^\infty \frac{v(s)}{v(t)}p_{ij}(t,s) \dot{a}_j(s)ds \\
&+ \sum_{\substack{k\in \mathscr{S}\\k\neq j}} \int_t^\infty \frac{v(s)}{v(t)}p_{ij}(t,s)\mu_{jk}(s) a_{jk}(s)ds\bigg],
\end{split}
\end{align}
where $\Delta a_j(t) \triangleq a_j(t)-a_j(t^-)$ for all $t\geq 0$ and $j\in \mathscr{S}$.

This is like saying that we assume that the functions $a_i$ admit a density (in the generalised sense) of the form
\begin{align}\label{da}
da_j (t) = \sum_{l=1}^n \Delta a_j(t_l) \delta_{t_l}(t) dt + \dot{a}_j(t)dt, \quad t\in [0,\infty), \quad j\in \mathscr{S},
\end{align}
where $y\mapsto \delta_x(y)$ is a (generalised) function called the Dirac delta function which has the property that
$$\int_{-\infty}^{\infty} \delta_x(y)f(y)dy = f(x),\quad x\in\mathbb{R}$$
for all locally integrable functions $f$. It can, for instance, be shown that
$$\varphi_{\varepsilon}(y) = \frac{1}{\sqrt{2\pi\varepsilon}} e^{-\frac{1}{2\varepsilon} (y-x)^2 }, \quad \varepsilon>0$$
converges in $L^1(\mathbb{R})$ as $\varepsilon \to 0$ to $\delta_x$.


Formula \eqref{reserve2} in the case of stochastic interest rates has been studied by \cite{Persson98} and the corresponding Thiele's partial defferential equation was derived. Here, in addition, we are interested in the case where $a_i$ and $a_{ij}$ also are stochastic and subject to the performance of $r$. Hence, the measure in \eqref{da} (or rather its density) is also stochastic.

\section{Pricing and reserving life insurance subject to cash flows with stochastic interest rate}\label{section:reserve}

As we can see in \eqref{reserve2}, if we assume that $a_i$, $a_{ij}$ are adapted to $\mathcal{F}^r$, then we need to price claims of the form
$$\Delta a_j(s)=\Delta a_j(r_{\leq s}), \quad, \dot{a}_j(s)=\dot{a}_j(s,r_{\leq s}), \quad a_{jk}(s)=a_{jk}(s,r_{\leq s}),$$
for possibly varying maturities $s\geq t$. Here, $r_{\leq t}$ denotes the whole trajectory of $r$ up to time $t$, and $\Delta a_j,\dot{a}_j, a_{jk}$ are suitable functionals on the space of continuous functions.

We will focus on claims that are functions of European and Asian type options on $r$. For convenience introduce the following notation
\begin{align}\label{asian}
\overline{r}_{s,t}\triangleq \int_s^t r_u du,\quad s,t\in [0,T], \quad s\leq t,
\end{align}
and $\overline{r}_t \triangleq \overline{r}_{0,t}$.

We will consider payoff functions of the form
$$\theta \left(r_s, \overline{r}_s \right),$$
for fixed $s\in [0,T]$ and some suitable function $\theta: \R^2 \rightarrow \R$. The risk neutral pricing approach tells us that if the market is free of arbitrage, there will be a martingale measure $\QQ$ such that the price at time $t$ of an option with payoff $\theta \left(r_T, \overline{r}_T \right)$ is given by
\begin{align}\label{price0}
\pi_t(T) \triangleq \E_\QQ \left[\frac{v(T)}{v(t)} \theta \left(r_T, \overline{r}_T \right)\Big| \mathcal{F}_t\right].
\end{align}

By the Markov property of $r$ and the fact that $\overline{r}_T=\overline{r}_t+\overline{r}_{t,T}$ we may express $\pi_t(T)$ as a function $u$ of the states of $r_t$ and $\overline{r}_t$ at time $t$. That is,
\begin{align}\label{price}
\pi_t(T) =u\left(t,r_t,\overline{r}_t \right)
\end{align}
where
\begin{align*}
u(t,x,y)\triangleq  \E_\QQ \left[e^{-\int_t^T r_s^{t,x}ds}\theta\left(r_T^{t,x},y+\int_t^T r_s^{t,x}ds\right) \right],
\end{align*}
and $r_{\cdot}^{t,x}$ denotes the process given in \eqref{rflow}.

The pricing measure $\QQ$ is given by
\begin{align*}
\frac{d\QQ}{d\PP}\Big|_{\mathcal{F}_t} = Z_t,\quad t\in [0,T],
\end{align*}
where
$$Z_t\triangleq \mathcal{E} \left( \int_0^{\cdot} \gamma_s dW_s\right)_t,\quad t\in [0,T],$$
and where $\mathcal{E}(Z)_t \triangleq \exp (Z_t-\frac{1}{2}[Z,Z]_t)$ for an Itô process $Z$ and $\gamma$ is an adapted process in $L^2([0,T]\times \Omega)$.

Girsanov's theorem implies that the process $W^{\QQ}$ defined by
$$W_t^{\QQ}\triangleq W_t -\int_0^t \gamma_s ds, \quad t\in [0,T],$$
is a $\QQ$-Wiener processes.

More concretely, assuming that $\gamma_t=\gamma(t,r_t)$ we have that the $\QQ$-dynamics of $r$ are given by
\begin{align*}
dr_t=(\lambda(t,r_t)+\tau(t,r_t)\gamma(t,r_t) )dt + \tau(t,r_t)dW_t^{\QQ}, \quad r_0=x, \quad t\in [0,T]. 
\end{align*}

The following result is the celebrated Feynman-Kac formula which can be used to find the value of \eqref{price} using the theory of partial differential equations. The classical formula is often offered for European type options. Here, we provide a version for path-dependent options adapted to our purposes.

\begin{thm}[Feynman-Kac formula]\label{Feynman-Kac}
Let $Y_t$ be an Itô diffusion given by
$$dY_t = b(t,Y_t)dt + a(t,Y_t)dB_t, \quad Y_0\in \mathbb{R}, \quad t\in [0,T],$$
where $b:[0,T]\times \mathbb{R} \rightarrow \mathbb{R}$, $\sigma:[0,T]\times \mathbb{R} \rightarrow \mathbb{R}$ satisfy classical Itô assumptions for existence and uniqueness and $B$ is a standard Wiener process.

Let $R,S:[0,T]\times\R\rightarrow\R$ be two continuous functions. Consider the function $(t,x,y)\mapsto U(t,x,y)$ solving the following PDE
\begin{align}\label{FKPDE}
\partial_t U +b(t,x) \partial_{x} U +x\partial_yU+ \frac{1}{2} a^2 (t,x) \partial_{x}^2 U -R(t,x)U = 0,
\end{align}
with terminal condition $U(T,x,y)=\psi(t,x,y)$.

Then
$$U(t,x,y)=\E\left[e^{-\int_t^T R(s,Y_s)ds}\psi\left(Y_T,\int_0^T Y_sds\right) \Big| Y_t = x, \int_0^tY_sds=y \right]$$
for all $(t,x)\in [0,T]\times \R$.
\end{thm}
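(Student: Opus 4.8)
The plan is to reduce the claim to the standard (European-type) Feynman–Kac theorem by lifting the one-dimensional diffusion $Y$ to a two-dimensional degenerate diffusion. First I would introduce the augmented process $\mathbf{Z}_t \triangleq (Y_t, \int_0^t Y_s\,ds)$, which solves the system
\begin{align*}
dY_t &= b(t,Y_t)\,dt + a(t,Y_t)\,dB_t,\\
d\Big(\textstyle\int_0^t Y_s\,ds\Big) &= Y_t\,dt.
\end{align*}
This is an Itô diffusion on $\R^2$ with drift vector $(b(t,x),x)$ and diffusion matrix whose only nonzero entry is $a^2(t,x)$ in the $(1,1)$-slot; in particular the generator of $\mathbf{Z}$ acting on a smooth function $f(t,x,y)$ is exactly
$$\mathcal{L}f = \partial_t f + b(t,x)\partial_x f + x\partial_y f + \tfrac12 a^2(t,x)\partial_x^2 f,$$
which is precisely the operator appearing on the left of \eqref{FKPDE} before the zeroth-order term. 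So \eqref{FKPDE} reads $\mathcal{L}U - R(t,x)U = 0$ with terminal data $U(T,x,y)=\psi(T,x,y)$.

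Next I would apply Itô's formula to the process $s\mapsto e^{-\int_t^s R(u,Y_u)\,du}\,U(s,Y_s,\int_0^s Y_u\,du)$ on $[t,T]$, started at $(t,x,y)$. The $dt$-terms collapse to $e^{-\int_t^s R\,du}\big(\mathcal{L}U - R\,U\big)(s,Y_s,\cdot)$, which vanishes identically by the PDE. What remains is the stochastic integral $\int_t^s e^{-\int_t^u R\,dw}\,a(u,Y_u)\partial_x U(u,Y_u,\cdot)\,dB_u$. Under the classical Itô assumptions on $b,a$ (linear growth, Lipschitz) together with the standing assumption that all moments of the solution are finite — and assuming the mild growth control on $U$ and $\partial_x U$ that is implicit in the statement, e.g. polynomial growth uniformly on $[0,T]$ — this local martingale is a true martingale on $[t,T]$. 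Taking conditional expectation given $\{Y_t = x,\ \int_0^t Y_u\,du = y\}$ and using the Markov property of $\mathbf{Z}$ (so that the conditioning reduces to a fresh start of the diffusion flow from $(t,x,y)$), the martingale term drops out and one is left with
$$U(t,x,y) = \E\Big[e^{-\int_t^T R(s,Y_s)\,ds}\,U\big(T,Y_T,\textstyle\int_0^T Y_s\,ds\big)\ \Big|\ Y_t=x,\ \int_0^t Y_s\,ds=y\Big],$$
and substituting the terminal condition $U(T,\cdot,\cdot)=\psi(T,\cdot,\cdot)$ gives the asserted formula. (The small notational mismatch between $\psi(T,x,y)$ and $\psi(t,x,y)$ in the statement, and between $\int_t^T$ and $\int_0^T$ inside $\psi$, is harmless once $y$ is taken to record $\int_0^t Y_s\,ds$.)

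The main obstacle is the martingale (rather than merely local-martingale) property of the stochastic integral, which requires an integrability bound such as $\E\big[\int_t^T a^2(u,Y_u)\,(\partial_x U)^2(u,Y_u,\int_0^u Y_s\,ds)\,du\big] < \infty$; this is where the hypotheses "all moments finite" and the implicit regularity/growth of the PDE solution $U$ are used. A secondary technical point is justifying that $U$ is smooth enough ($C^{1,2,1}$) to apply Itô's formula — in the stated generality this is taken as part of the hypothesis "the function $U$ solving the PDE", so I would simply assume a classical solution exists and is of the required regularity, as the paper's remark on relaxing regularity assumptions suggests is the intended reading. Everything else is the routine bookkeeping of Itô's formula on the degenerate two-dimensional lift.
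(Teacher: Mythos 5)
Your proposal follows essentially the same route as the paper's own proof: define the discounted process $Z_s = e^{-\int_t^s R(v,Y_v)\,dv}\,U\bigl(s,Y_s,\int_0^s Y_v\,dv\bigr)$, apply Itô's formula, use the PDE to cancel the drift, and take (conditional) expectations so the stochastic integral vanishes. Your extra attention to the true-martingale property of the stochastic integral and to the notational slips in the statement is a welcome refinement of the same argument, not a different method.
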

\begin{proof}
Define the process
\begin{align}\label{procZ}
Z_s \triangleq e^{-\int_t^s R(v,Y_v)dv} U\left(s,Y_s, \int_0^sY_vdv \right),\quad t\leq s\leq T.
\end{align}

We denote by $\partial_s U$, $\partial_xU$ and $\partial_yU$ the partial derivaties of $(s,x,y)\mapsto U(s,x,y)$ with respect to $s$, $x$ and $y$, respectively. Itô's formula yields
\begin{align*}
dZ_s =&\, -R(s,Y_s) e^{-\int_t^s R(v,Y_v)dv} U ds + e^{-\int_t^s R(v,Y_v)dv} dU\\
&\hspace{-1cm}= e^{-\int_t^s R(v,Y_v)dv}\left[-R(s,Y_s) U+\partial_s U + b(s,Y_s)\partial_x U + Y_s\partial_y U + \frac{1}{2} a(s,Y_s)^2 \partial_x^2 U\right]ds\\
&+ e^{-\int_t^s R(v,Y_v)du} a(s,Y_s)\partial_x U dB_s
\end{align*}

The finite variation part is $0$ since $U$ satisfies PDE \eqref{FKPDE}. Hence,
$$Z_T-Z_t = \int_t^Te^{-\int_t^s R(v,Y_v)dv} a(s,Y_s)\partial_x U\left(s,Y_s,\int_0^s Y_vdv\right)  dB_s.$$
Taking expectations and using the martingale property of the Itô integral we have
\begin{align}\label{rel1}
\E\left[Z_T\Big|Y_t=x,\int_0^t Y_sds=y\right]=\E\left[Z_t\Big|Y_t=x,\int_0^tY_sds=y\right].
\end{align}

Due to \eqref{procZ} we have
$$\E\left[Z_t\Big|Y_t=x,\int_0^t Y_sds=y\right] = U(t,x,y).$$

Hence, by \eqref{rel1} and the fact that $U(T,x,y)=\psi(x,y)$ we conclude that
$$U(t,x,y)=\E\left[e^{-\int_t^T R(v,Y_v)dv}\psi\left(Y_T, \int_0^T Y_sds \right)\Big|Y_t=x,\int_0^tY_sds=y\right].$$

\end{proof}
%

Applying the above result to the case $Y_t=r_t$, $R(t,x)=x$ we have
$$b(t,x)=\lambda(t,x) +\gamma(t,x)\tau(t,x), \quad a(t,x)= \tau(t,x).$$

Hence, $u$ from \eqref{price} solves the following PDE
\begin{align}\label{PDE1}
\partial_t u+(\lambda(t,x)+\gamma(t,x)\tau(t,x))\partial_x u+x\partial_yu+\frac{1}{2} \tau(t,x)^2 \partial_x^2 u- xu =0
\end{align}
with final condition $u(T,x,y)=\theta(x,y)$.

Denote the differential operator
\begin{align}\label{DiffOp}
\begin{split}
L =& \,(\lambda(t,x)+\gamma(t,x)\tau(t,x))\partial_x +x\partial_y+\frac{1}{2} \tau(t,x)^2 \partial_x^2 .
\end{split}
\end{align}

We will denote by $u_T^\theta$ the solution to the PDE \eqref{PDE1} with payoff function $\theta$ and maturity time $T$. That is
\begin{align}\label{PDEu}
\partial_t u_T^\theta (t,x,y) + Lu_T^\theta (t,x,y) = xu_T^\theta (t,x,y),\quad u_T^\theta(T,x,y)=\theta(x,y).
\end{align}

From now on, we will assume that the policy functions $a_j$ are $\mathbb{P}$-a.s. a.e. differentiable with (stochastic) Lebesgue-Stieltjes measure $da_j$ given by
\begin{align}\label{da2}
da_j(t) = \sum_{l=1}^n \Delta a_j(t_l) \delta_{t_l}(t) dt +\dot{a}_j(t) dt, \quad t\in [0,\infty), \quad j\in \mathscr{S}, 
\end{align}
with
\begin{align}\label{aj}
\Delta a_j(t_l) = f_j\left(r_{t_l},\overline{r}_{t_l} \right), \quad \dot{a}_j(t) = g_j\left(t,r_t, \overline{r}_t\right), \quad t\in [0,\infty), \quad j\in \mathscr{S},
\end{align}
and that
\begin{align}\label{ajk}
a_{jk}(t) = h_{jk}\left(t,r_t, \overline{r}_t\right), \quad t\in [0,\infty), \quad j,k\in \mathscr{S}, 
\end{align}
for some functions $f_j:\R^2\rightarrow \R$, $g_j,h_{jk}:[0,T]\times\R^2\rightarrow \R$, $j\in \mathscr{S}$, $k\neq j$ where we recall that $\overline{r}_t$ is the notation in \eqref{asian}. These functions describe the stream of payments for each state of the insured and for the transitions between states subject to interest rate instantaneous and average changes.

In general, the cash flows coming from a policy are stochastic due to the fact that the states of the insured are stochastic, while the payments are deterministic, given that the state is known. Here, the payments are also stochastic \emph{per se}, since they also depend on the interest rate curve.

Using the notation from \eqref{da2} and \eqref{ajk} in connection with \eqref{price} and \eqref{reserve2} we have that the prospective reserve is given by,
\begin{align*}
\begin{split}
V_{i,\mathcal{F}}^+ (t,A)=&\, \sum_{j\in \mathscr{S}} \sum_{l=1}^n p_{ij}(t,t_l)u_{t_l}^{f_j}(t,r_t,\overline{r}_t) \mathbb{I}_{\{t<t_l\}}+\sum_{j\in \mathscr{S}}\int_t^\infty p_{ij}(t,s)u_s^{g_j(s,\cdot)}(t,r_t,\overline{r}_t)ds\\
&+\sum_{\substack{j,k\in \mathscr{S}\\k\neq j}} \int_t^\infty p_{ij}(t,s)\mu_{jk}(s)u_s^{h_{jk}(s,\cdot)}(t,r_t,\overline{r}_t)ds.
\end{split}
\end{align*}
'
In particular, we see that the prospective reserve of interest-rate-linked policies with stochastic interest rate can be expressed as a function $V(t,r_t,\overline{r}_t)$ of $t$, $r_t$ and $\overline{r}_t$. This is due to the fact that the process $\overline{r}$ is $\mathcal{F}$-adapted and $r$ is Markovian. We can characterize the function $V_i(t,x,y)$, $(t,x,y)\in [0,T]\times \R^2$ by deriving the so-called Thiele's (partial) differential equation.

From now on, and without loss of generality, we assume $n=1$ and $t_1=T>0$. Thus
\begin{align}\label{reserveT}
\begin{split}
V_{i,\mathcal{F}}^+ (t,A)=&\, \sum_{j\in \mathscr{S}} p_{ij}(t,T)u_{T}^{f_j}(t,r_t,\overline{r}_t) +\sum_{j\in \mathscr{S}}\int_t^T p_{ij}(t,s)u_s^{g_j(s,\cdot)}(t,r_t,\overline{r}_t)ds\\
&+\sum_{\substack{j,k\in \mathscr{S}\\k\neq j}} \int_t^T p_{ij}(t,s)\mu_{jk}(s)u_s^{h_{jk}(s,\cdot)}(t,r_t,\overline{r}_t)ds.
\end{split}
\end{align}
where $u$ above satisfies the PDE \eqref{PDE1} with the corresponding maturity times and terminal conditions.

In the above expression the first term corresponds to the benefits associated to being in state $j$ at the end of the contract, the second term corresponds to inflow and outflow of benefits and premiums and the last term are benefits from transitions between $j$ to $k$. We see that at the end of the contract we have indeed $V_{i,\mathcal{F}}^+(T,A)=f_i(r_T,\overline{r}_T)$. The following theorem is the corresponding Thiele's parial differential equation for \eqref{reserveT}, that is policies with payment streams subject to both sudden and average changes in the interest rate curve.

\begin{thm}[Thiele's partial differential equation]\label{ThielePDE}
Let $A$ be the payout function determined by the policy functions $f_i$, $g_i$, $i\in \mathscr{S}$ and $h_{ij}$, $i,j\in \mathscr{S}$, $j\neq i$ as defined in \eqref{aj} and \eqref{ajk}. Denote by $V_{i,\mathcal{F}}^+(t,A)$, $t\in [0,T]$ the value of an insurance contract at time $t$ given that the insured is in state $i\in \mathscr{S}$ at time $t$ and given the information $\mathcal{F}_t$. Then $$V_{i,\mathcal{F}}^+(t,A)=V_i(t,r_t,\overline{r}_t),$$
where the function $V_i: [0,T]\times\mathbb{R}^2\rightarrow \mathbb{R}$ is the solution to the following partial differential equation
\begin{align}\label{thiele}
\begin{split}
\partial_t V_i =&\, x V_i-g_i(t,x,y) -\sum_{j\neq i} \mu_{ij}(t) (h_{ij}(t,x,y) + V_j-V_i)-L V_i
\end{split}
\end{align}
with $L$ being the differential operator defined as
\begin{align}\label{DiffOp}
L  =&\,  (\lambda(t,x)+\gamma(t,x)\tau(t,x))\partial_x+ x\partial_y+ \frac{1}{2}\tau^2(t,x)\partial_x^2.
\end{align}
The boundary condition is given by $V_i(T,x,y)=f_i(x,y)$.
\end{thm}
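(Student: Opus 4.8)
The plan is to differentiate the explicit expression \eqref{reserveT} for $V_{i,\mathcal{F}}^+(t,A) = V_i(t,r_t,\overline{r}_t)$ with respect to $t$ and match terms. First I would take the representation
\begin{align*}
V_i(t,x,y) = \sum_{j\in \mathscr{S}} p_{ij}(t,T)u_{T}^{f_j}(t,x,y) + \sum_{j\in \mathscr{S}}\int_t^T p_{ij}(t,s)u_s^{g_j(s,\cdot)}(t,x,y)\,ds + \sum_{\substack{j,k\in \mathscr{S}\\ k\neq j}} \int_t^T p_{ij}(t,s)\mu_{jk}(s)u_s^{h_{jk}(s,\cdot)}(t,x,y)\,ds
\end{align*}
and apply $\partial_t$. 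There are three sources of $t$-dependence: the transition probabilities $p_{ij}(t,\cdot)$, the lower limit $t$ in the two integrals, and the first time-slot of the pricing functions $u$. For the $p_{ij}(t,s)$ dependence I would invoke Kolmogorov's backward equation, $\partial_t p_{ij}(t,s) = \mu_i(t)p_{ij}(t,s) - \sum_{k\neq i}\mu_{ik}(t)p_{kj}(t,s)$, using $\mu_i(t)=\sum_{k\neq i}\mu_{ik}(t)$; this is what produces the $\sum_{j\neq i}\mu_{ij}(t)(V_j - V_i)$ term after regrouping the sums over states. For the $\partial_t$ acting on the first slot of each $u$, I would use the Feynman-Kac PDE \eqref{PDEu}: $\partial_t u_s^\theta(t,x,y) = x u_s^\theta(t,x,y) - L u_s^\theta(t,x,y)$, which after summing against $p_{ij}$ (and the extra $\mu_{jk}$ weights) and using $\sum_j p_{ij}(t,s) = 1$-type bookkeeping yields the $xV_i - LV_i$ contribution. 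Finally, differentiating the lower limits of the two integrals (Leibniz rule) contributes $-u_t^{g_i(t,\cdot)}(t,x,y) - \sum_{k\neq i}\mu_{ik}(t)u_t^{h_{ik}(t,\cdot)}(t,x,y)$; since $u_t^{\theta}(t,x,y) = \theta(t,x,y)$ by the terminal condition of \eqref{PDEu} (maturity equal to current time), these collapse to $-g_i(t,x,y) - \sum_{k\neq i}\mu_{ik}(t)h_{ik}(t,x,y)$.

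Collecting all three pieces gives
\begin{align*}
\partial_t V_i = \Big[\mu_i(t)V_i - \sum_{j\neq i}\mu_{ij}(t)V_j\Big] + \big[xV_i - LV_i\big] - g_i(t,x,y) - \sum_{j\neq i}\mu_{ij}(t)h_{ij}(t,x,y),
\end{align*}
and rewriting $\mu_i(t)V_i - \sum_{j\neq i}\mu_{ij}(t)V_j = -\sum_{j\neq i}\mu_{ij}(t)(V_j - V_i)$ reproduces \eqref{thiele} exactly. The boundary condition $V_i(T,x,y)=f_i(x,y)$ is immediate from \eqref{reserveT}: at $t=T$ the two integrals vanish, $p_{ij}(T,T)=\delta_{ij}$, and $u_T^{f_i}(T,x,y)=f_i(x,y)$ by the terminal condition in \eqref{PDEu}.

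The interchange of $\partial_t$ with the state-sums is harmless since $\mathscr{S}$ is finite. The step I expect to be the main technical obstacle is justifying differentiation under the integral sign in the two time-integrals: one needs local boundedness (uniformly in $s$ near the relevant range) of $u_s^{g_j(s,\cdot)}$, $u_s^{h_{jk}(s,\cdot)}$ and their $t$-derivatives — equivalently of the functions $x u - Lu$ — which in turn requires mild growth/regularity control on $g_j$, $h_{jk}$ and on the coefficients $\lambda,\tau,\gamma$ (the finite-moments hypothesis on $r$ together with continuity of the $\mu_{ij}$ and the classical Itô assumptions should suffice). One must also note that the map $s\mapsto u_s^{g_j(s,\cdot)}(t,x,y)$ is continuous at $s=t$ so that the Leibniz boundary term is well-defined and equals the terminal value $g_j(t,x,y)$; this uses continuity of $s\mapsto u_s^\theta$ down to maturity, again a consequence of the standing regularity assumptions. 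Modulo these standard continuity and integrability verifications, the identity follows by the term-by-term matching described above.
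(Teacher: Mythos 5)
Your proposal is correct and follows essentially the same route as the paper: both differentiate the representation \eqref{reserveT} term by term, using Kolmogorov's backward equation for $\partial_t p_{ij}(t,s)$, the Feynman--Kac relation $\partial_t u_s^\theta = x u_s^\theta - L u_s^\theta$, and the Leibniz boundary term at $s=t$, which collapses via $p_{ij}(t,t)=\delta_{ij}$ and $u_t^\theta(t,x,y)=\theta(x,y)$ to $g_i + \sum_{k\neq i}\mu_{ik}(t)h_{ik}$. The only difference is organizational: the paper first packages the running-payment terms as $F_i^s=\sum_j p_{ij}(t,s)u_s^{\theta_j^s}$, obtains the intermediate PDE for $F_i^s$ by computing the It\^o differential of $F_i^s(t,r_t,\overline{r}_t)$ in two ways, and then integrates over $s$ and cancels the $G_i^T$ terms, whereas you apply $\partial_t$ directly to the whole expression — the same identities, and the same (acknowledged) interchange-of-limits justifications, arranged differently.
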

\begin{proof}
The PDE given in \eqref{thiele} is a second order parabolic linear PDE. It is therefore well-posed and admits a unique solution under the assumption that the coefficients are continuous. See e.g. \cite{Evans10}.

Observe that if $\theta_T^1$ and $\theta_T^2$ are two final conditions for the PDE given in Theorem \ref{Feynman-Kac} and $u^{\theta_T^1}$ and $u^{\theta_T^2}$ denote their solutions then $u^{\theta_T^1+\theta_T^2}$ solves the PDE corresponding to the final condition $\theta_T^1 +\theta_T^2$. This is a trivial consequence of the uniqueness of the PDE and the fact that $u^\theta$ appearing in \eqref{price0} is linear in $\theta$.

Define the function
$$V_i(t,x,y) = G_i^T(t,x,y)+\int_t^T F_i^s (t,x,y)ds, \quad t\in [0,T], \quad i\in \mathscr{S},$$
where
\begin{align*}
G_i^T(t,x,y)\triangleq \sum_{j\in \mathscr{S}} p_{ij}(t,T) u_{T}^{f_j}(t,x,y), \quad t\in [0,T],\quad  i\in \mathscr{S},
\end{align*}
and
\begin{align}\label{Fi}
F_i^s(t,x,y)\triangleq \sum_{j\in \mathscr{S}} p_{ij}(t,s) u_{s}^{\theta_j^s}(t,x,y), \quad t\in [0,T],\quad s\in [t,T], \quad i\in \mathscr{S},
\end{align}
where
$$\theta_j^s(x,y) \triangleq g_j(s,x,y) + \sum_{\substack{j,k\in \mathscr{S} \\ k\neq j}} \mu_{jk}(s) h_{jk}(s,x,y).$$

Then the (stochastic) reserve $V_{i,\mathcal{F}}(t,A)$ with payout function $A$ determined by the policy functions $f_i$, $g_i$ and $h_{ij}$ as defined in \eqref{aj} and \eqref{ajk} is given by
$$V_{i,\mathcal{F}}^+(t,A) = V_i(t,x,y)\Big|_{(x,y)=(r_t,\overline{r}_t)} =G_i^T(t,r_t,\overline{r}_t) + \int_t^T F_i^s(t,r_t,\overline{r}_t)ds.$$

Our arguments will be for fixed $T$ and $s\geq t$ being maturity times. Hence $u_T^{f_j}$ and $u_s^{\theta_j^s}$ are well-defined.

First, observe that by Kolmogorov's backward equation we have
\begin{align}\label{dp}
\partial_t p_{ij}(t,s)= \sum_{\substack{k\in \mathscr{S}\\k\neq i}} \mu_{ik}(t) (p_{ij}(t,s)-p_{kj}(t,s)).
\end{align}

Further, we can find a straightforward relation between the derivatives of $F_i^s$ and those of $u_s^{\theta_j^s}$,
\begin{align*}
\partial_t F_i^s &= \sum_j \sum_{k\neq i} \mu_{ik}(t) (p_{ij}(t,s)-p_{kj}(t,s)) u^{\theta_j^s} + \sum_j p_{ij}(t,s) \partial_t u^{\theta_j^s}\\
&=\sum_{k\neq i} \mu_{ik}(t) (F_i^s-F_k^s) + \sum_j p_{ij}(t,s) \partial_t u^{\theta_j^s},
\end{align*}
where we used relation \eqref{dp} first and \eqref{Fi} thereafter. Moreover, if $L$ is the differential operator of \eqref{DiffOp}, then
$$L F_i^s = \sum_j p_{ij}(t,s) L u_s^{\theta_j^s},$$
since $L$ is linear.

For easier readability we drop the point $(t,r_t,\overline{r}_t)$ in the notation. Now, we compute the Itô differential of $F_i^s(t,r_t,\overline{r}_t)$ under the risk neutral measure $\QQ$ in two ways: first, by direct defition, i.e.
\begin{align}\label{dF1}
dF_i^s = (\partial_t F_i^s + LF_i^s) dt +\tau(t,r_t) \partial_x F_i^s  dW_t^{\QQ},
\end{align}
and now we compute $dF_i^s(t,r_t,\overline{r}_t)$ using the relation \eqref{Fi} in connection with the identity \eqref{dp},
\begin{align*}
\begin{split}
dF_i^s =&\, \left( \sum_{k\neq i} \mu_{ik}(t)(F_i^s-F_k^s) + \sum_j p_{ij}(t,s)\left( \partial_t u_s^{\theta_j^s} +Lu_s^{\theta_j^s}\right) \right) dt+\tau(t,r_t) \partial_x F_i^s  dW_t^{\QQ}.
\end{split}
\end{align*}
Now, we use the fact that $u_s^{\theta_j^s}$ is a solution to $\partial_t u_s^{g_j^s} +Lu_s^{g_j^s} =x u_s^{g_j^s}$. Hence,
\begin{align}\label{dF3}
dF_i^s = \left(\sum_{k\neq i} \mu_{ik}(t)(F_i^s-F_k^s) + r_t F_i^s  \right) dt+ \tau(t,r_t) \partial_x F_i^s  dW_t^{\QQ}.
\end{align}

Equating \eqref{dF1} and \eqref{dF3} we obtain the following PDE in time and space for the function $(t,x,y)\mapsto F_i^s(t,x,y)$ for fixed $s$:
\begin{align}\label{PDEF}
\partial_t F_i^s + LF_i^s = \sum_{k\neq i} \mu_{ik}(t)(F_i^s-F_k^s) + x F_i^s.
\end{align}

On the other hand, recall that $V_i(t,r_t,\overline{r}_t) = V_i(t,x,y)\Big|_{(x,y)=(r_t,\overline{r}_t)}$ and
$$V_i(t,x,y)=G_i^T(t,x,y)+\int_t^T F_i^s(t,x,y) ds.$$
Therefore,
$$\partial_t V_i(t,x.y)=\partial_t G_i^T(t,x,y)+ \int_t^T \partial_t F_i^s(t,x,y)ds - \lim_{\substack{s\to t \\ s>t}}F_i^s(t,x,y),$$
where we used Lebesgue's dominated convergence theorem and the fundamental theorem of calculus. Observe that
\begin{align*}
\lim_{\substack{s\to t \\ s>t}}F_i^s(t,x,y)&=\sum_j p_{ij}(t,t) u_t^{\theta_j^t}(t,x,y) = u_t^{\theta_i^t}(t,x,y)=\theta_i^t(x,y) \\
&= g_i(t,x,y) + \sum_{k\neq i}\mu_{ik}(t) h_{ik}(t,x,y).
\end{align*}

Altogether,
$$\int_t^T \partial_t F_i^s(t,x,y)ds = \partial_t V_i(t,x,y)-\partial_t G_i^T(t,x,y)+ g_i(t,x,y) + \sum_{k\neq i}\mu_{ik}(t) h_{ik}(t,x,y).$$
In particular, evaluating at $(t,x,y)=(t,r_t,\overline{r}_t)$ we have
\begin{align}\label{inttT}
\begin{split}
\int_t^T \partial_t F_i^s(t,r_t,\overline{r}_t)ds\\
&\hspace{-2cm}= \partial_t V_i(t,r_t,\overline{r}_t)-\partial_t G_i^T(t,r_t,\overline{r}_t)+ g_i(t,r_t,\overline{r}_t) + \sum_{k\neq i}\mu_{ik}(t) h_{ik}(t,r_t,\overline{r}_t).
\end{split}
\end{align}

Integrating \eqref{PDEF} with respect to $s$ on the region $[t,T]$, interchanging integration and differentiability, using \eqref{inttT} and $V_i(t,x,y)=G_i^T(t,x,y)+\int_t^T F_i^s(t,x,y)ds$ we obtain
$$\partial_t ( V_i - G_i^T) + g_i + \sum_{k\neq i} \mu_{ik}(t) h_{ik} + L (V_i -G_i^T) = \sum_{k\neq i} \mu_{ik}(t) \left(V_i -V_k + G_k^T - G_i^T \right) +x (V_i - G_i^T).$$

All the terms involving $G_i^T$ will cancel each other. Indeed, observe that
\begin{align*}
\partial_t G_i^T &= \sum_j \partial_t p_{ij}(t,T) u_T^{f_j} + \sum_j p_{ij}(t,T) \partial_t u_T^{f_j}\\
&= \sum_{k\neq i} \mu_{ik}(t) (G_i^T - G_k^T) + \sum_j p_{ij}(t,T) \partial_t u_T^{f_j}.
\end{align*}
Also, 
\begin{align*}
\partial_t G_i^T+LG_i^T &= \sum_{k\neq i} \mu_{ik}(t) (G_i^T - G_k^T) + \sum_{j}p_{ij}(t,T) (\partial_t u_T^{f_j} + Lu_T^{f_j})\\
&=\sum_{k\neq i} \mu_{ik}(t) (G_i^T - G_k^T) +xG_i^T,
\end{align*}
where we used that $\partial_t u_T^{f_j} + Lu_T^{f_j} = x u_T^{f_j}$.

As a result, we obtain $V_i(T,x,y)=f_i(x,y)$ and
$$\partial_t V_i = x V_i - g_i(t,x,y) - \sum_{k\neq i} \mu_{ik}(t) h_{ik}(t,x,y) +\sum_{k\neq i} \mu_{ik}(t) (V_i - V_k) - LV_i$$
and the result follows by grouping together the sums over $k\in \mathscr{S}$, $k\neq i$.
\end{proof}

The following result is a direct consequence of Thiele's PDE obtained in Theorem \ref{ThielePDE} showing that one can aggregate reserves from policies of equal ages and expiration dates.
\begin{cor}
Assume we have $n\geq 1$ policies with the same maturities $T$ and all policyholders enter the contract at the same age. Then the present value of all reserves
$$\overline{V}_i(t,x,y) \triangleq \sum_{k=1}^n V_i^k(t,x,y), \quad i \in \mathscr{S},$$
where $V_i^k(t,x,y)$ corresponds to the reserve of policy $k$ at time $t$ and interest rate state $x$, given that the policy holder is in state $i\in \mathscr{S}$, satisfies the PDE
$$\partial_t \overline{V}_i = x\overline{V}_i - \overline{g}_i - \sum_{j\neq i} \mu_{ij}(t) ( \overline{h}_{ij}+\overline{V}_j-\overline{V}_i) - L \overline{V}_i,$$
with terminal condition $\overline{V}_i(T,x,y)=\overline{f}_i(x,y)$ where
$$\overline{f}_i(x,y) = \sum_{k=1}^n f_i^k(x,y), \quad \overline{g}_i(x,y) = \sum_{k=1}^n g_i^k(x,y), \quad \overline{h}_{ij}(x,y) = \sum_{k=1}^n h_{ij}^k(x,y).$$
Here, $f_i^k$, $g_i^k$ and $h_{ik}^k$ denote the policy functions for the $k$th policyholder.
\end{cor}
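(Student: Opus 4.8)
The plan is to reduce the corollary to a direct application of Theorem \ref{ThielePDE} together with linearity of the PDE in its data. First I would observe that each individual reserve $V_i^k(t,x,y)$ satisfies, by Theorem \ref{ThielePDE} applied to the $k$th policy with its own policy functions $f_i^k,g_i^k,h_{ij}^k$, the equation
\begin{align*}
\partial_t V_i^k = x V_i^k - g_i^k(t,x,y) - \sum_{j\neq i}\mu_{ij}(t)\bigl(h_{ij}^k(t,x,y) + V_j^k - V_i^k\bigr) - L V_i^k,
\end{align*}
with terminal condition $V_i^k(T,x,y) = f_i^k(x,y)$. Crucially, the transition rates $\mu_{ij}(t)$ and $\mu_i(t)$ are the same for all $k$ — this is exactly the hypothesis that all policyholders enter at the same age (so the mortality/transition law is common) and all contracts share the maturity $T$ — and the operator $L$ from \eqref{DiffOp} depends only on the common interest rate dynamics, hence is also the same for every $k$.

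Next I would sum the $k$ equations over $k=1,\dots,n$. Since differentiation in $t$, the multiplication operator by $x$, and the differential operator $L$ are all linear, $\partial_t$ commutes with the finite sum, $\sum_k x V_i^k = x\overline{V}_i$, and $\sum_k L V_i^k = L\overline{V}_i$. The state-dependent terms sum termwise: $\sum_k g_i^k = \overline{g}_i$, and because $\mu_{ij}(t)$ is independent of $k$,
\begin{align*}
\sum_{k=1}^n \sum_{j\neq i}\mu_{ij}(t)\bigl(h_{ij}^k + V_j^k - V_i^k\bigr) = \sum_{j\neq i}\mu_{ij}(t)\bigl(\overline{h}_{ij} + \overline{V}_j - \overline{V}_i\bigr).
\end{align*}
Collecting everything yields precisely $\partial_t \overline{V}_i = x\overline{V}_i - \overline{g}_i - \sum_{j\neq i}\mu_{ij}(t)(\overline{h}_{ij} + \overline{V}_j - \overline{V}_i) - L\overline{V}_i$, and summing the terminal conditions gives $\overline{V}_i(T,x,y) = \sum_k f_i^k(x,y) = \overline{f}_i(x,y)$. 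Finally, since the aggregated coefficients $\overline{f}_i,\overline{g}_i,\overline{h}_{ij}$ are again continuous (finite sums of continuous functions), the well-posedness remark at the start of the proof of Theorem \ref{ThielePDE} shows this PDE has a unique solution, which must therefore be $\overline{V}_i$; alternatively one simply notes that $\overline{V}_i$ is by definition the reserve of the aggregated cash flow with policy functions $\overline{f}_i,\overline{g}_i,\overline{h}_{ij}$, so Theorem \ref{ThielePDE} applies to it directly and gives the claimed PDE.

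There is no real obstacle here: the only point requiring care is making explicit that the common-age and common-maturity hypotheses are exactly what force the $\mu_{ij}(t)$ to be shared across policies, so that the nonlinear-looking coupling terms add up cleanly; once that is noted, the result is immediate from linearity and uniqueness.
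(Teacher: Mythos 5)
Your proof is correct and follows exactly the route the paper intends: the corollary is stated there as a direct consequence of Theorem \ref{ThielePDE}, obtained by applying the theorem to each policy and summing, using linearity of $\partial_t$, $L$ and the coupling terms together with the common transition intensities $\mu_{ij}(t)$ and common maturity. Your write-up simply makes explicit the linearity and uniqueness details that the paper leaves implicit, so there is nothing to correct.
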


\section{Life insurance policies with stochastic policy functions}\label{section:stochres}
Henceforward, we assume that $\mathscr{S}=\{\ast,\dag\}$. Since there are only two states and one of them is absorbing, we drop the notations $\ast$ and $\dag$. That is to say, we simply write $\mu$ to denote the force of mortality, $f$, $g$ and $h$ the policy functions and $V$ the mathematical reserve given that the insured is alive, since otherwise $V_\dag \equiv 0$.

For simulations purposes, we will assume the Gompertz-Makeham law of mortality on $\mu$ given by
$$\mu(t) = \alpha_0+\alpha_1\exp(\alpha_2 t),\quad t\geq 0, \quad \alpha_0,\alpha_1,\alpha_2\in \mathbb{R}.$$
This law of mortality describes the age dynamics of human mortality rather accurately in the age window from
about $30$ to $80$ years of age, which is good enough for our analysis. For this reason, we excluded
the very first and last age groups from the data. We obtain $\hat{\alpha_0}= 0.00127529$, $\hat{\alpha_1}=2.51137\cdot 10^{-6}$ and $\hat{\alpha_2}=0.1271853$ which are the least squares estimates obtained by fitting Norwegian mortality from 2019 (both genders together). See Statistics Norway, table: 05381 for the employed data. Our examples will assume that the age of the insured at the beginning of the contract is fixed to $30$ years old.

We will introduce the following notations
\begin{align}\label{UK}
U_s^K(t,r_t) \triangleq \E_{\QQ}\left[ e^{-\int_t^s r_u du} \mathbb{I}_{\{r_s\geq K \}}|\sigma(r_t)\right],
\end{align}
and
\begin{align}\label{UbarK}
\overline{U}_s^K(t,r_t,\overline{r}_t)\triangleq \E_{\QQ}\left[ e^{-\int_t^s r_u du} \mathbb{I}_{\{\overline{r}_s\geq Ks \}}|\sigma(r_t)\otimes\sigma(\overline{r}_t)\right],
\end{align}
which appear naturally in many common policy specifications with regimes, such as endowment, term insurance, pensions, etc.

For simulation purposes we consider the dynamics of the Vasicek short-rate model, which are given by
\begin{align}\label{vasicek}
dr_t = a(b-r_t)dt +\sigma dW_t, \quad r_0\in \mathbb{R}, \quad t\in [0,T],
\end{align}
where $a,b,\sigma\in \mathbb{R}$, $\sigma>0$ are model parameters. In this case the market price of risk $\gamma(t,x)\equiv \gamma\in \mathbb{R}$ is an additional constant parameter. Since the Vasicek model has the property that it is invariant under change of measure, we simply take $\gamma=0$ and the reader may adjust the interest market price of risk by a modification of $a$ and $b$.

Under the model in \eqref{vasicek}, one can find fairly explicit expressions. For example, PDE \eqref{PDEu} can be solved in closed form for some specific terminal conditions $\theta$.

Introduce the notations
$$\mu_{r_s|r_t}(h|x)\triangleq x e^{-ah}+b(1-e^{-ah}), \quad \mu_{\overline{r}_{t,s}|r_t}(h,x)\triangleq (x-b)\frac{1}{a}(1-e^{-ah})+bh,$$
$$\sigma_{r_s|r_t}^2 (h)\triangleq \frac{\sigma^2}{2a}\left( 1- e^{-2ah}\right),\quad  \sigma_{\overline{r}_{t,s}|r_t}^2 (h)\triangleq \frac{\sigma^2}{a^2}\left[ h-2\frac{1}{a}(1-e^{-ah})+\frac{1}{2a}\left(1-e^{-2ah}\right)\right]$$
and the covariance
$$\sigma_{r_s|r_t, \overline{r}_{t,s}|r_t}(h)\triangleq \frac{\sigma^2}{a}\left[\frac{1}{a}(1-e^{-ah})-\frac{1}{2a}\left(1-e^{-2ah}\right)\right],$$
which gives the correlation function
$$\rho_{r_s|r_t, \overline{r}_{t,s}|r_t}(h)\triangleq \frac{\sigma_{r_s|r_t, \overline{r}_{t,s}|r_t}(h)}{\sigma_{r_s|r_t}(h) \sigma_{\overline{r}_{t,s}|r_t}(h)}.$$
It holds that
$$r_s|r_t \sim N(\mu_{r_s|r_t}(s-t|r_t), \sigma_{r_s|r_t}^2 (s-t)),\quad \overline{r}_{t,s}|r_t\sim N(\mu_{\overline{r}_{t,s}|r_t}(s-t,r_t), \sigma_{\overline{r}_{t,s}|r_t}^2 (s-t)).$$

From the above, one can deduce that
\begin{align*}
\begin{split}
U_s^K(t,r_t)&=\\
&\hspace{-1.5cm}= e^{-\mu_{\overline{r}_s|r_t}(h|x) + \frac{1}{2}\sigma_{\overline{r}_s|r_t}^2 (h) }\Phi\left(-\rho_{r_s|r_t, \overline{r}_{t,s}|r_t}(h)\sigma_{\overline{r}_s|r_t} (h)+\frac{\mu_{r_{t,s}|r_t}(h,x)-K}{\sigma_{r_{t,s}|r_t} (h)} \right)\Bigg|_{(h,x)=(s-t,r_t)},
\end{split}
\end{align*}
where $\Phi$ denotes the distribution function of a standard normally distributed random variable. Similarly, one can show that
\begin{align*}
\begin{split}
\overline{U}_s^K(t,r_t,\overline{r}_t)&=\\\\
&\hspace{-1.5cm}= e^{-\mu_{\overline{r}_s|r_t}(h|x) + \frac{1}{2}\sigma_{\overline{r}_s|r_t}^2 (h) }\Phi\left(-\sigma_{\overline{r}_s|r_t} (h)+\frac{\mu_{\overline{r}_{t,s}|r_t}(h,x)-Ks+y}{\sigma_{\overline{r}_{t,s}|r_t} (h)} \right)\Bigg|_{(h,x,y)=(s-t,r_t, \overline{r}_t)}.
\end{split}
\end{align*}
Observe that when $K=-\infty$ then we obtain the classical price of a zero-coupon bond at time $t$ with maturity $s$ and when $K=\infty$ both expectations are null.

\subsection{Pure endowment with premium reduction on high interest rate levels}

Let $E>0$ be the guaranteed endowment to be paid at the end of the contract $T$ upon survival. Let $\rho\in [0,1]$ be a reduction factor and $K>0$ an interest rate level above which premiums are reduced by a factor of $1-\rho$. Then the value of this contract, premiums taken into account is given by
\begin{align*}
V^\rho(t,r_t)=& \, -\pi^\rho \int_t^T p_{\ast\ast}(t,s)\left(U_s^{-\infty}(t,r_t)-\rho U_s^K(t,r_t)\right)ds   +E p_{\ast\ast}(t,T)U_T^{-\infty}(t,r_t),
\end{align*}
where $U_s^{-\infty}$ is the function given in \eqref{UK}.

We choose $\pi^\rho$ in such a way that $V^\rho(0,r_0)=0$. In this case, the expected difference between reserves is given by
$$\E[V^{\rho}(t,r_t)-V^{0}(t,r_t)] = (\pi^0-\pi^\rho) \int_t^T p_{\ast\ast}(t,s)\E[U_s^{-\infty}(t,r_t)]ds +\pi^\rho \rho \int_t^T \E[U_s^K(t,r_t)]ds.$$

In Figure \ref{fig:2} we show an example of two random interest rate curves and the corresponding reserves for a contract ($\times 1\, 000$) with and without reduction and the difference between such reserves. The parameters for the interest rate model are $r_0=0.03$, $a=0.1$, $b=0.2$, $\gamma=0$, $\sigma=0.01$. The contract pays an endowment of $\$ 100\, 000 $ in $T=10$ years for a person who is $30$ years old today. For the reduction case we take a threshold of $K=0.04$ and a premium reduction of $\rho=20\%$.

\begin{figure}[H]
\centering
  \includegraphics[scale=0.6]{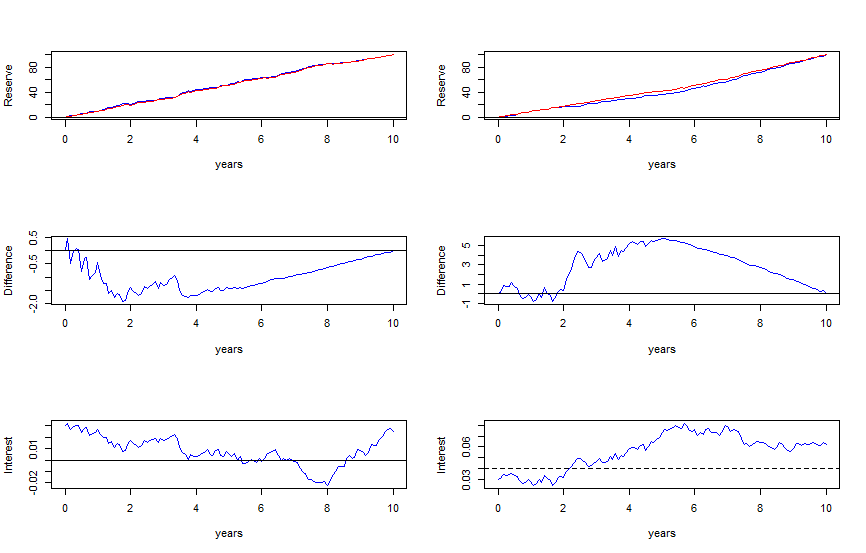}
   \caption{On top: Reserves for an endowment of $\$ 100 \,000$ in $\times 1 \, 000$ units with no reduction (in blue) and with a reduction of $20\%$ on premiums for interest rate above $K=4\%$ (in red). In the middle: difference between reserves. On the bottom: the corresponding interest rate (random) outcomes. The premiums obtained are $\pi^0= \$ 8\, 770.28$ and $\pi^\rho = \$ 9\, 092.40$ when $r_0=3\%$.}
\label{fig:2}
\end{figure}

One may argue that $\pi^\rho = \$ 9\, 092.40$ is a rather high premium for an endowment of $\$ 100 \, 000$, but the insured could have potentially profited from long high interest regimes as, for instance, the outcome on the right. There, the insured has to pay approximately $(1-\rho)\pi^\rho = 7\, 273.92$ for eight years of high interest rate and $\pi^\rho=\$ 9\, 092.40$ for the two first years of the contract. Thus, a total amount of $\$ 76\, 376.16$. Under the same outcome, an insured with no reduction would have paid $10 \pi^0 = \$ 87\, 702.87$. In any case, to overcome the potential issue of paying too high premiums, one can add a premium refund at the end of the contract if the rates are low, or as we will see later, define a policy based on average rates rather than current rates.

We can see in Figure \ref{fig:2} the reserves and their differences for two (random) interest rate regimes; one with regime mostly under $4\%$ (right), and another with regime crossing $K=4\%$ (left). Both outcomes shows that the reserves for the case of a policy with premium reduction requires slightly higher reserves. In Figure \ref{fig:meanendow}, we show the mean difference in the long-run.

\begin{figure}[H]
\centering
\begin{subfigure}{.5\textwidth}
  \centering
  \includegraphics[scale=0.3]{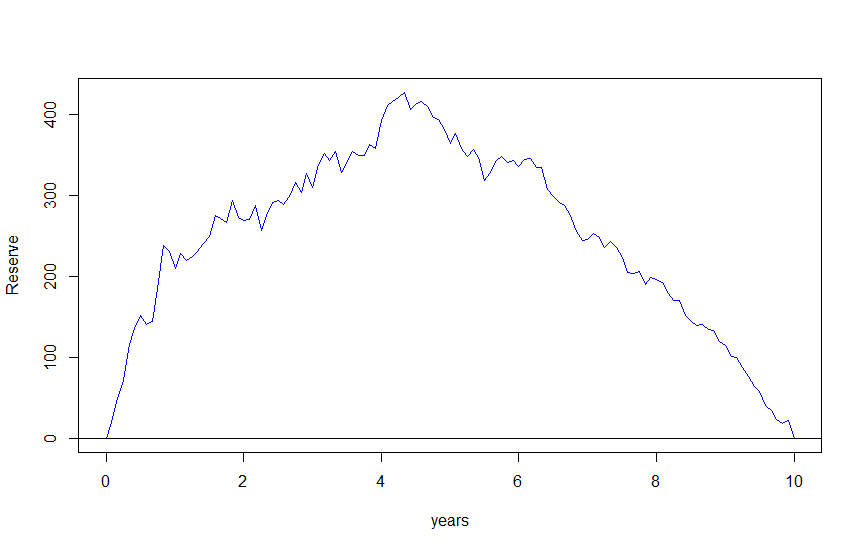}
  \caption{$1\, 000$ simulations ($\sim 16$ min)}
  \label{fig:sub1}
\end{subfigure}%
\begin{subfigure}{.5\textwidth}
  \centering
  \includegraphics[scale=0.3]{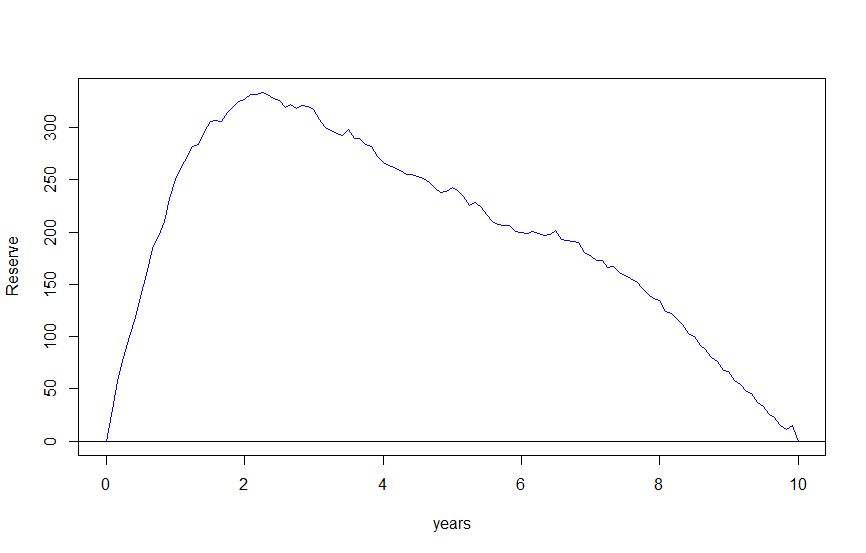}
  \caption{$10\, 000$ simulations ($\sim 2$ h $36$ min)}
  \label{fig:sub2}
\end{subfigure}
\begin{subfigure}{.5\textwidth}
  \centering
  \includegraphics[scale=0.3]{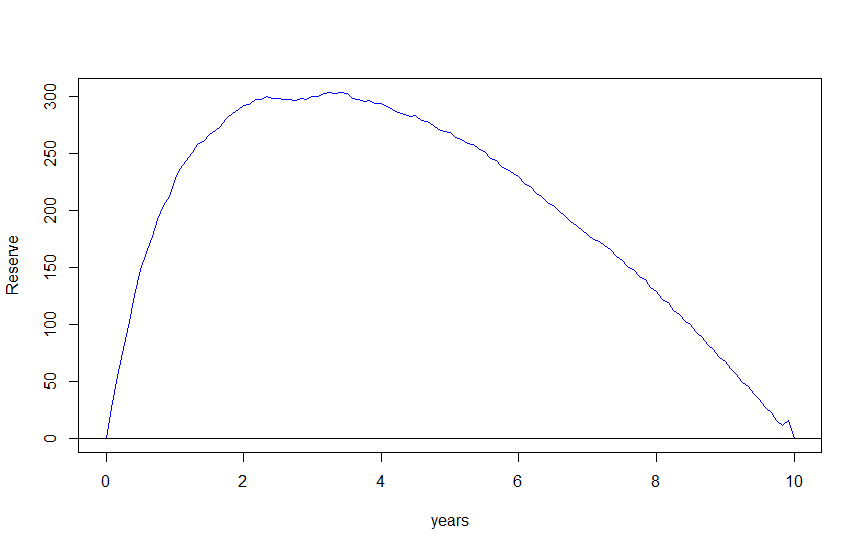}
  \caption{$100\, 000$ simulations ($\sim 1$ d $1$ h $40$ min)}
  \label{fig:sub2}
\end{subfigure}%
\begin{subfigure}{.5\textwidth}
  \centering
  \includegraphics[scale=0.3]{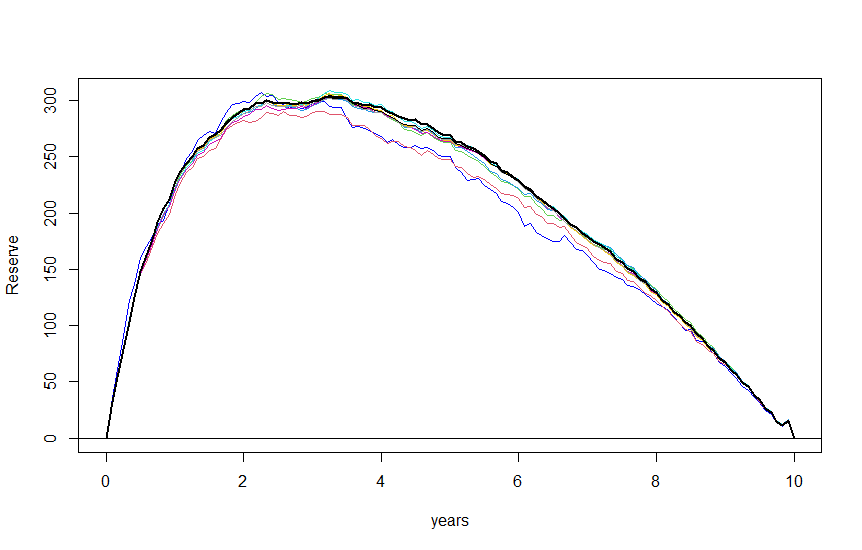}
  \caption{Multiples of $10\, 000$ simulations}
  \label{fig:sub2}
\end{subfigure}
\caption{Mean difference reserve between an endowment with reduction $\rho=20\%$ above $K=4\%$ and no reduction. We can see that the difference is more prominent and significant at the beginning of the contract and that the approximation is more reliable after $10\ 000$ simulations.}
\label{fig:meanendow}
\end{figure}

To finish this example we show the reserve surface, i.e. the function $(t,x)\mapsto V^{\rho}(t,x)$ for $\rho=20\%$ and the surface of the difference between the reserves with reduction and without.
\begin{figure}[H]
\centering
\begin{subfigure}{.5\textwidth}
  \centering
  \includegraphics[scale=0.5]{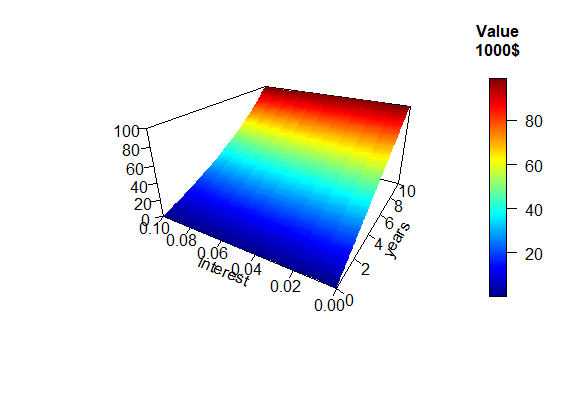}
  \caption{Reserve surface, $\rho=20\%$}
  \label{fig:surface sub1}
\end{subfigure}%
\begin{subfigure}{.5\textwidth}
  \centering
  \includegraphics[scale=0.5]{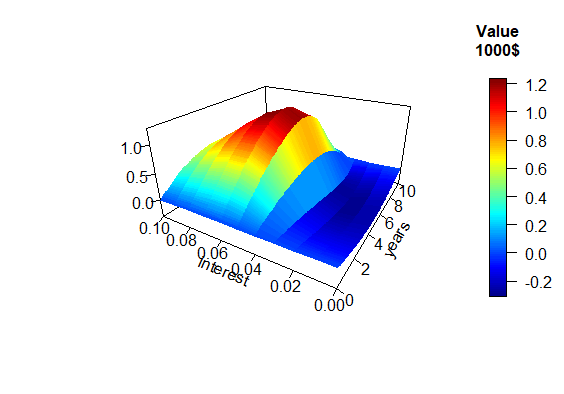}
  \caption{Surface of the difference}
  \label{fig: surface sub2}
\end{subfigure}
\caption{}
\label{fig:surface}
\end{figure}

The surfaces in Figure \ref{fig:surface} gives us complete information of the behaviour of the reserves and their differences for each plausible interest rate value $x$. One can think that all outcomes from Figure \ref{fig:2} are paths along the surface starting at interest $0.03$ and ending at $\$ 100 \, 000$.


\subsection{Pension insurance with pension bonus during high interest rate regimes}\label{pensionex}

In the case of a pension insurance, the present value $V^\rho$ of a policy with periodical premiums $\pi$ paying a periodical pension of $P$ if interest is low and $(1+\rho)P$ if interest is high, from time $\hat{T}$ until end of life is given by
\begin{align*}
V^\rho(t,r_t) =&\, -\pi \int_{\min\{t,\hat{T}\}}^{\hat{T}} p_{\ast\ast}(t,s)U_s^{-\infty}(t,r_t)ds\\
&+P \int_{{\max\{t,\hat{T}\}}}^\infty p_{\ast\ast}(t,s)\left(U_s^{-\infty}(t,r_t)+\rho U_s^{K}(t,r_t) \right)ds.
\end{align*}

The PDE associated to $V^\rho$ in this case is given by
$$\partial_t V = xV +\pi \mathbb{I}_{\{x<K, 0\leq t < \hat{T}\}}-P\left( 1+\rho \mathbb{I}_{\{x\geq K\}} \right)\mathbb{I}_{\{\hat{T}\leq t < \infty\}}  +\mu(t)V - LV,$$
where
$$LV=\left[a(b-x)+\gamma \tau \right]\partial_xV-\frac{1}{2}\tau^2\partial_x^2 V$$
with boundary condition $\displaystyle \lim_{t\to \infty} V(t,x)=0$.

Insurance companies usually set the end of the contract at an age of $120$ or similar. Hence, a computationally more friendly boundary condition for the case that the insured is $30$ years old would be $V(90,x)=0$ being the maximum length of the contract $T=90$ years.

We still consider a person who is $30$ years old today and will retire at the age of $70$, that is in $\hat{T}=40$ years from now. In Figure \ref{fig:4} we show an example of two random interest rate curves and the corresponding reserves for a contract ($\times 1\, 000$) with and without bonus on pensions and the difference between such reserves. The parameters for the interest rate model are $r_0=0.03$, $a=0.1$, $b=0.2$, $\gamma=0$, $\sigma=0.01$. The contract pays a pension $P$ of $\$ 20\, 000 $ yearly if $r_t< 4\%$ and $(1+\rho)P$ if $r_t\geq 4\%$. We take a pension bonus of $\rho=20\%$.

\begin{figure}
\centering
  \includegraphics[scale=0.6]{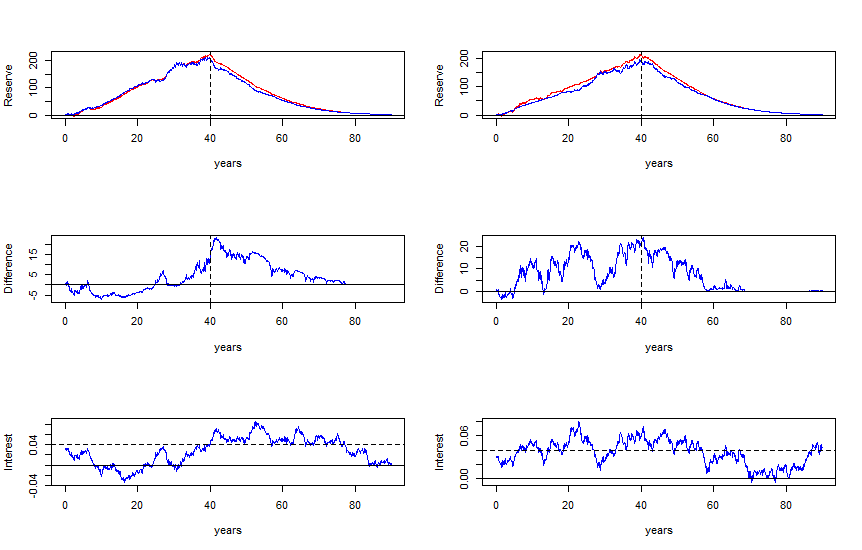}
   \caption{On top: Reserves for a pension insurance of $\$ 20 \,000$ yearly in $\times 1 \, 000$ units with no bonus (in blue) and with a bonus of $20\%$ on pensions for interest rate regimes above $K=4\%$ (in red). In the middle: difference between reserves. On the bottom: the corresponding interest rate (random) outcomes. The premiums obtained are $\pi^0= \$ 8\, 611.31$ and $\pi^\rho = \$ 8\, 910.87$}
\label{fig:4}
\end{figure}

\begin{figure}
\centering
\begin{subfigure}{.5\textwidth}
  \centering
  \includegraphics[scale=0.24]{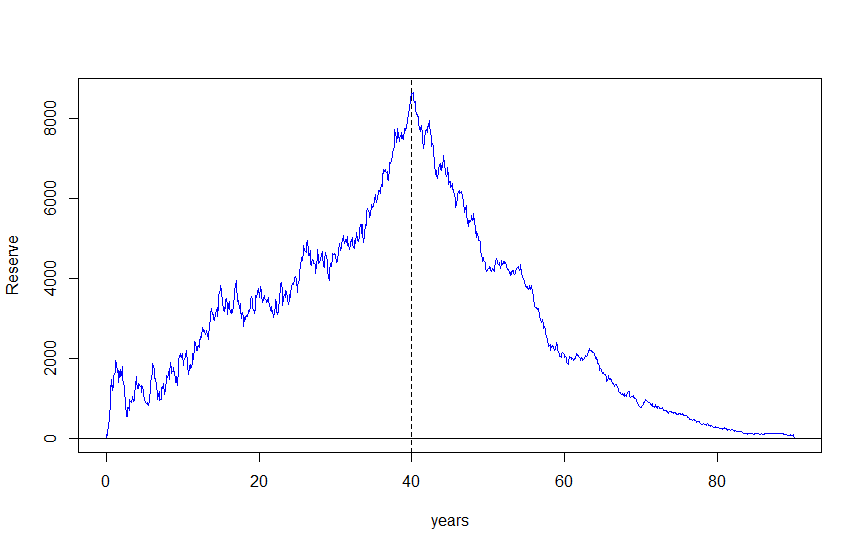}
  \caption{$50$ simulations ($\sim 53$ minutes)}
  \label{fig:sub2}
\end{subfigure}%
\begin{subfigure}{.5\textwidth}
  \centering
  \includegraphics[scale=0.24]{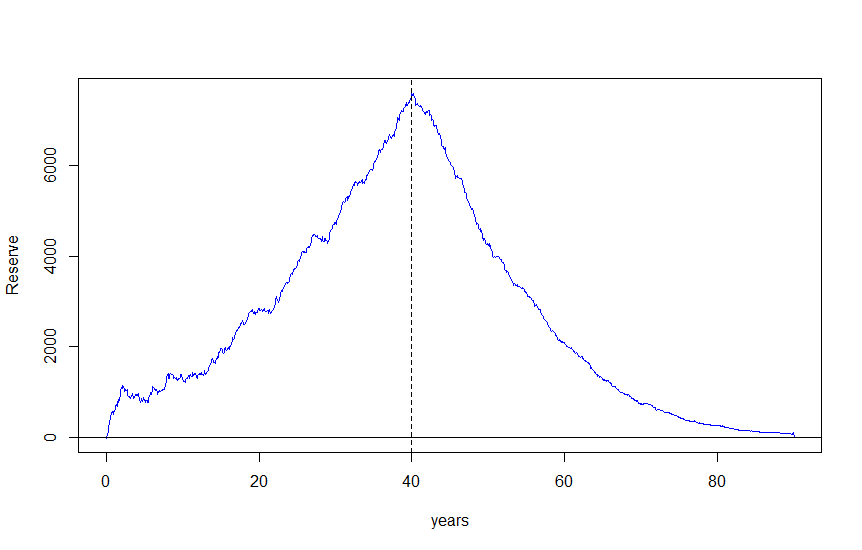}
  \caption{$500$ simulations ($\sim 9$h $31$ minutes)}
  \label{fig:sub2}
\end{subfigure}
\begin{subfigure}{.5\textwidth}
  \centering
  \includegraphics[scale=0.24]{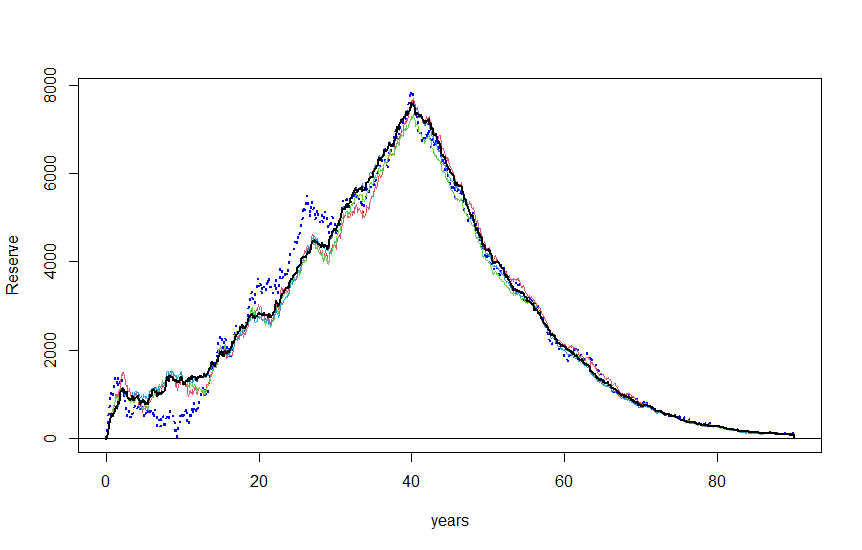}
  \caption{Multiples of $100$ simulations. Dotted line corresponds to $10$ and thick line to $500$.}
  \label{fig:sub2}
\end{subfigure}
\caption{Mean difference reserve between a pension insurance with bonus of $\rho=20\%$ above $K=4\%$ and no bonus. We can see that the difference is more prominent and significant around the retirement age.}
\label{fig:3}
\end{figure}

\begin{figure}[H]
\centering
\begin{subfigure}{.5\textwidth}
  \centering
  \includegraphics[scale=0.32]{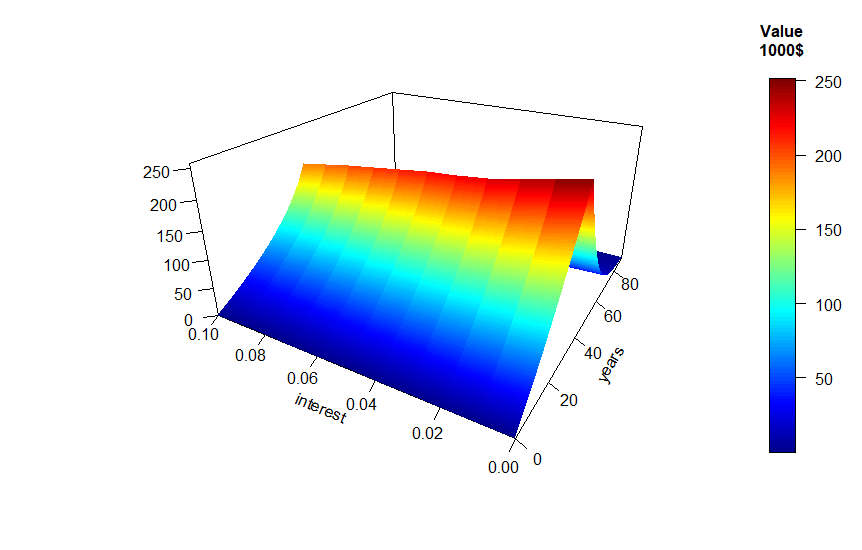}
  \caption{Reserve surface, $\rho=20\%$}
  \label{fig:sub1}
\end{subfigure}%
\begin{subfigure}{.5\textwidth}
  \centering
  \includegraphics[scale=0.32]{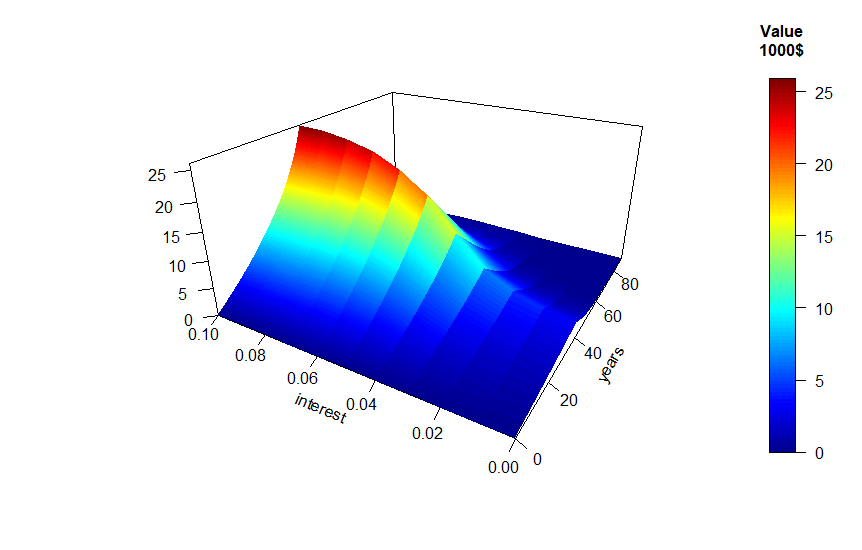}
  \caption{Surface of the difference}
  \label{fig:sub2}
\end{subfigure}
\caption{Both surfaces have been computed using the direct formula for the reserve ($\sim 31$ minutes)}
\label{fig:3}
\end{figure}

\subsection{Interest rate caps and floors insurance}

One can also look at four classical options on $r_T$. Figure \ref{fig:capsfloors} shows the surfaces for the present value of an interest rate cap, an interest rate floor, a call and a put option with strike rate $4\%$ at the end of the contract. We obtain these present values by solving the PDE from Theorem \ref{ThielePDE} which in this case is given by
$$\partial_t V = xV + \mu(t) V -LV,$$
with terminal conditions $V_{\mbox{cap}}(T,x)=E\mathbb{I}_{\{x\geq K\}}$, $V_{\mbox{floor}}(T,x)=E\mathbb{I}_{\{x\leq K\}}$, $V_{\mbox{call}}(T,x)=E\max\{x-K,0\}$ and $V_{\mbox{put}}(T,x)=E\max\{K-x,0\}$. We use an explicit finite difference method to obtain the solutions. The curve we see at $t=0$ corresponds to the fair premium of the contract.

\begin{figure}
\centering
\begin{subfigure}{.5\textwidth}
  \centering
  \includegraphics[scale=0.3]{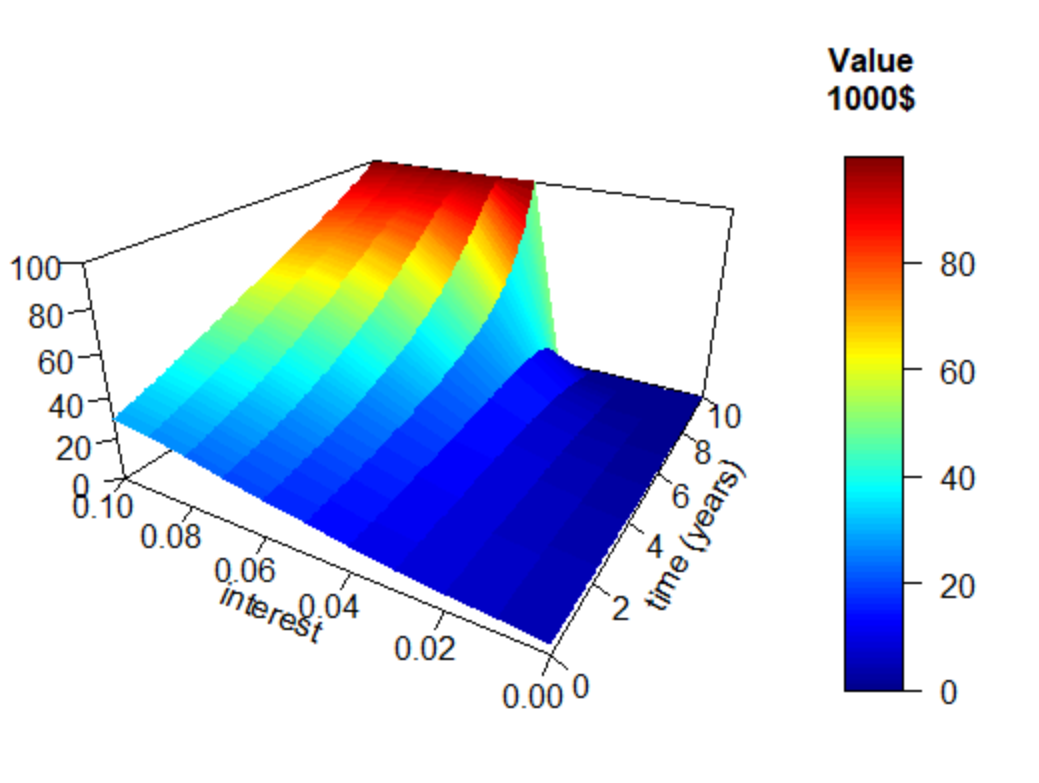}
  \caption{Reserve surface for a cap}
  \label{fig:sub1}
\end{subfigure}%
\begin{subfigure}{.5\textwidth}
  \centering
  \includegraphics[scale=0.3]{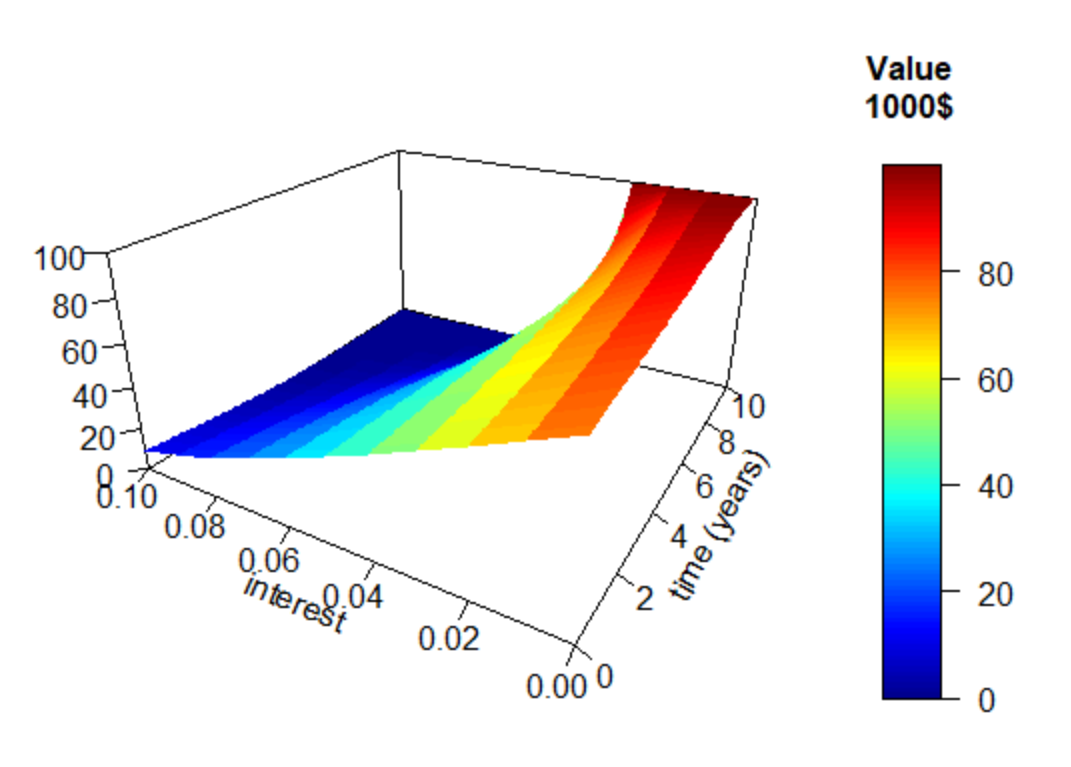}
  \caption{Reserve surface for a floor}
  \label{fig:sub2}
\end{subfigure}
\begin{subfigure}{.5\textwidth}
  \centering
  \includegraphics[scale=0.3]{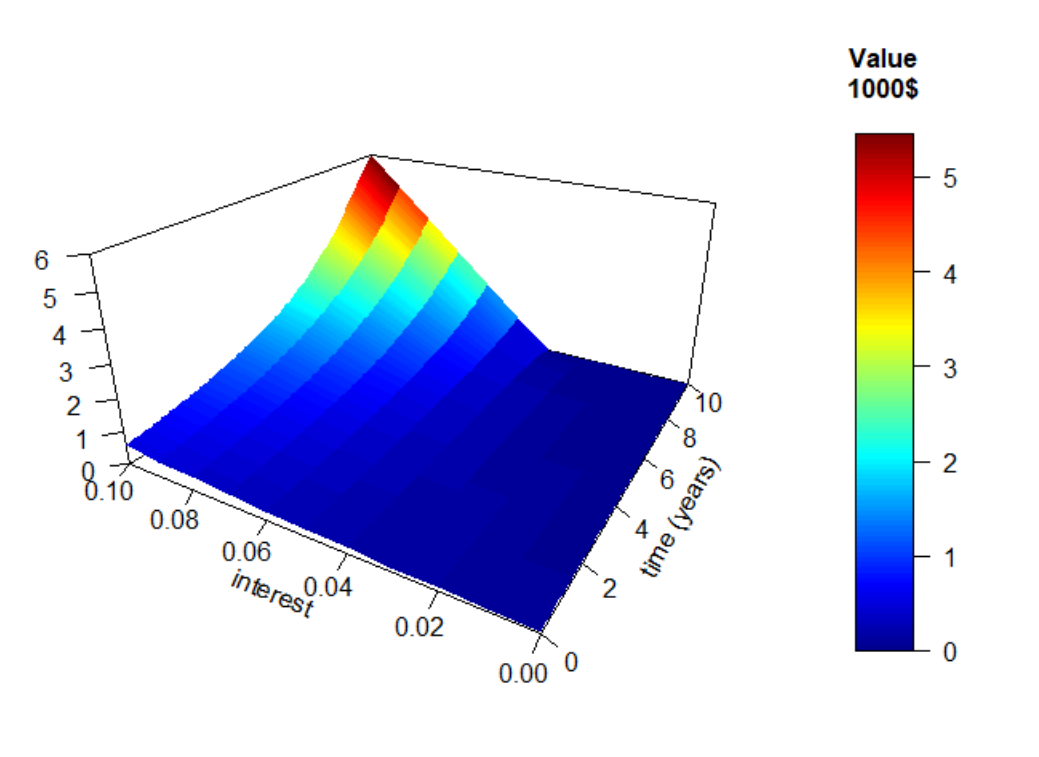}
  \caption{Reserve surface for a caplet}
  \label{fig:sub2}
\end{subfigure}%
\begin{subfigure}{.5\textwidth}
  \centering
  \includegraphics[scale=0.3]{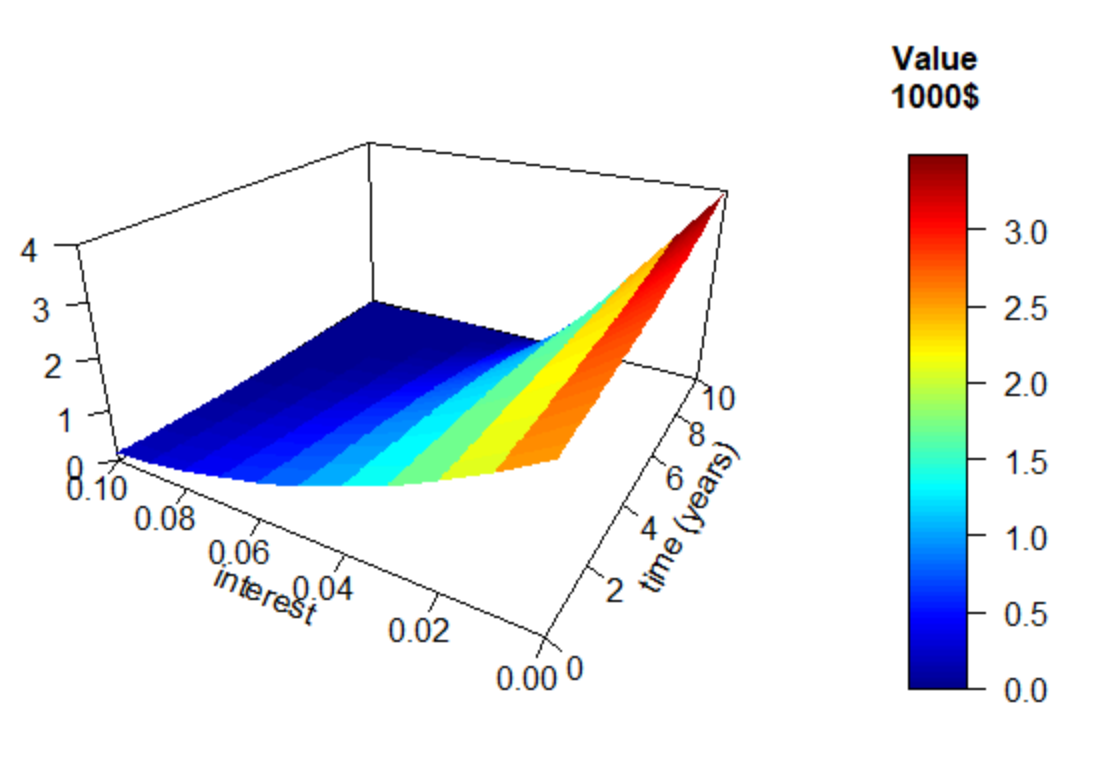}
  \caption{Reserve surface for a floorlet}
  \label{fig:sub2}
\end{subfigure}
\caption{European options on interest rate with strike $K=4\%$. Surface is obtained solving Thiele's PDE \eqref{thiele} with an explicit finite difference method with step sizes $h=0.1$ for time and $l=1/12$ for space. Execution time around $0.2$ seconds.}
\label{fig:capsfloors}
\end{figure}


\subsection{Binary endowment based on average interest rate}\label{bin.endow}
Let $E_1$ and $E_2$ be two endowments. The policy pays $E_1$ upon survival at expiry time $T$ if the average interest rate during the contract time is above $K$ and $E_2$ otherwise. Mathematically, the present value of this policy is given by
$$V(t,r_t,\overline{r}_t)= p_{\ast\ast}(t,T) \left[E_1\overline{U}_T^K(t,r_t,\overline{r}_t)+E_2 \left( \overline{U}_T^{-\infty}(t,r_t,\overline{r}_t)-\overline{U}_T^K(t,r_t,\overline{r}_t)\right)\right],$$
where here $\overline{U}_s^K$ is the function defined in \eqref{UbarK}.

In Figure \ref{fig:bin} we show the reserves for a binary endowment of $E_1 = \$ 150 \, 000$ if the average is above $K=4\%$ at the end of the contract and $E_2=\$ 100 \, 000$ if the average is below under the same model as in the previous examples. The reserve is indeed stochastic.

\begin{figure}[H]
\centering
  \includegraphics[scale=0.8]{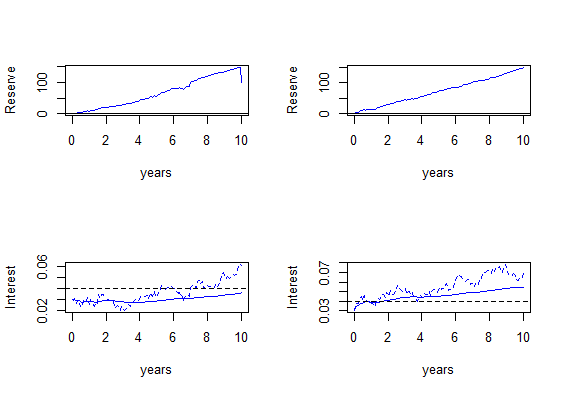}
   \caption{On top: Reserves for a binary endowment of $E_1=\$ 150 \,000$ in $\times 1 \, 000$ units if average rate is above $K=4\%$ and $E_2=\$ 100 \,000$ otherwise. On the bottom: the corresponding interest rate (random) outcomes (dashed line) and the running average during the contract (solid line). The premium obtained was $\pi= \$ 9\, 516.71$ when $r_0=3\%$. Vasicek parameters: $a=0.1$, $b=0.2$, $\sigma=0.01$.}
\label{fig:bin}
\end{figure}


\subsection{A reinsurance treaty on pensions when insurer's average return is low}\label{pensionreins2}
Using the new model introduced in Section \ref{section:reserve} we can consider reinsurance agreements between the insurer and the reinsurer. If we assume that $r$ models the return on investments for the insurer, then their liabilities depend on $r$. If returns are low, then the pension liabilities increase, making it difficult to meet the requirements with their customers. One possibility to relax such risk could be to cede some of the risk (of low returns) to the reinsurer. In this example, we consider two polices; one for the insured and another one for the insurer (with the reinsurer). For a pension policy paying a yearly pension $P$ from $\hat{T}$ upon death (or high enough $T$) we consider
$$\dot{a}_\ast^{ins}(s) =g_\ast^{ins}(s,r_s,\overline{r}_s)= P, \quad s\in [\hat{T},T),$$
giving rise to
\begin{align*}
V^{ins}(t,r_t)\triangleq P\int_{\max\{t,\hat{T}\}}^T p_{\ast\ast}(t,s)U_s^{-\infty}(t,r_t)ds, \quad t\in [\hat{T},T]
\end{align*}
the value of the policy at each time. On the other hand, we can look at the performance of $\overline{r}_t$ from the contract start to the time pension payments start $\hat{T}$. If average returns are too low we can cede some of the liabilities to the reinsurer by purchasing a (stochastic) endowment (re)insurance in a similar fashion as in Section \ref{bin.endow}, which pays $100\rho\%$ of the value of the pension at time $t=\hat{T}$, that is the following (stochastic) endowment
$$\rho V^{ins}(\hat{T},r_{\hat{T}})\mathbb{I}_{\{\overline{r}_{\hat{T}}<K\hat{T}\}}.$$
The value of this endowment (re)insurance is thus given by
\begin{align*}
V^{re}(t,r_t,\overline{r}_t)=p_{\ast\ast}(t,\hat{T})\E_{\QQ}\left[e^{-\int_t^{\hat{T}} r_udu} \rho V^{ins}(\hat{T},r_{\hat{T}})\mathbb{I}_{\{\overline{r}_{\hat{T}}<K\hat{T}\}}\Big| \sigma(r_t)\otimes\sigma(\overline{r}_t)\right].
\end{align*}

The above conditional expectation is rather involved. This example shows why the PDE derived in Theorem \ref{ThielePDE} may be useful when looking at policies with stochastic payments where explicit expressions cannot directly be obtained. The reserve $V^{re}(t,x,y)$ is a function of three variables: time $t$, level of return $x$ and level of the average return $y$. It satisfies the following PDE
$$\partial_t V^{re} = xV^{re} +\mu(t)V^{re} - LV^{re},\quad t\in [0,\hat{T}],$$
where $L$ is the differential operator in \eqref{DiffOp}. The terminal condition is given by
$$V^{re}(\hat{T},x,y)=\rho V^{ins}(\hat{T},x)\mathbb{I}_{\{y<K\hat{T}\}}.$$
Here, $V^{ins}(t,x)$ is a function of two variables where again, $t$ is time, $x$ is the level of the return and it satisfies the following PDE
 $$\partial_t V^{ins} = xV^{ins}-P +\mu(t)V^{ins} - LV^{ins},\quad t\in [\hat{T},T],$$
with terminal condition $V^{ins}(T,x)=0$.


\end{document}